\documentclass[11pt]{article}

\usepackage{graphicx,amssymb,amsmath}
\usepackage{amsfonts}
\usepackage{booktabs}
\usepackage{color}
\usepackage{enumerate}
\usepackage{float}

\usepackage{cite}
\usepackage[pdftex, plainpages = false, pdfpagelabels, 
                 bookmarks=false,
                 bookmarksopen = true,
                 bookmarksnumbered = true,
                 breaklinks = true,
                 linktocpage,
                 pagebackref,
                 colorlinks = true,  
                 linkcolor = blue,
                 urlcolor  = cyan,
                 citecolor = red,
                 anchorcolor = green,
                 hyperindex = true,
                 hyperfigures
                 ]{hyperref}

\graphicspath{{./figures/}}
\usepackage{mathrsfs}
\usepackage{subfig}
\usepackage{tabularx}
\usepackage{authblk}

\usepackage[in]{fullpage}
\usepackage[top=0.8in, bottom=0.9in, left=0.9in, right=0.9in]{geometry}

\newcommand{\poly}{\text{poly}}
\def\bbR{\mathbb{R}}
\def\calA{\mathcal{A}}
\def\calH{\mathcal{H}}
\def\calL{\mathcal{L}}
\def\calU{\mathcal{U}}
\def\calD{\mathcal{D}}
\def\calC{\mathcal{C}}
\def\hatL{\widehat{L}}
\newtheorem{observation}{Observation}
\newtheorem{definition}{Definition}
\newtheorem{corollary}{Corollary}

\newtheorem{lemma}{Lemma}
\newtheorem{theorem}{Theorem}
\newenvironment{proof}{\noindent {\textbf{Proof:}}\rm}{\hfill $\Box$ \rm\bigskip}

\title{An Optimal Algorithm for Computing Many Faces in Line Arrangements\thanks{A preliminary version of this paper will appear in {\em Proceedings of the 42nd International Symposium on Computational Geometry (SoCG 2026)}.}
}

\author{
Haitao Wang\thanks{Kahlert School of Computing,
University of Utah, Salt Lake City, UT 84112, USA. {\tt haitao.wang@utah.edu}}
}

\date{}

\begin{document}

\maketitle

\vspace{-0.35in}
\begin{abstract}
Given a set of $m$ points and a set of $n$ lines in the plane, we consider the problem of computing the faces of the arrangement of the lines that contain at least one point. In this paper, we present an $O(m^{2/3}n^{2/3}+(n+m)\log n)$ time algorithm for the problem. We also show that this matches the lower bound under the algebraic decision tree model and thus our algorithm is optimal. In particular, when $m=n$, the runtime is $O(n^{4/3})$, which matches the worst case combinatorial complexity $\Omega(n^{4/3})$ of all output faces. This is the first optimal algorithm since the problem was first studied more than three decades ago [Edelsbrunner, Guibas, and Sharir, SoCG 1988]. 
\end{abstract}

{\em Keywords:} Many faces, line arrangements, cuttings, $\Gamma$-algorithms, decision tree complexities

\section{Introduction}
\label{sec:intro}
Let $P$ be a set of $m$ points and $L$ a set of $n$ lines in the plane. We consider the problem of computing the faces of the arrangement of the lines of $L$ that contain at least one point of $P$; these faces are called {\em non-empty faces}. Note that each non-empty face only needs to be output once if it contains multiple points of $P$. Let $\calA(L)$ denote the arrangement of $L$.

\paragraph{Previous work.}
This is one of the most fundamental problems in computational geometry and has been studied extensively.  A straightforward algorithm can implicitly determine the non-empty faces of $\calA(L)$ in $O(m\log n+n^2)$ time, by first constructing $\calA(L)$, building a point location data structure for $\calA(L)$~\cite{ref:EdelsbrunnerOp86,ref:KirkpatrickOp83,ref:SarnakPl86}, and then finding the non-empty faces using the point location queries for points of $P$.  
The following upper bounds have been proved on the combinatorial
complexity of all non-empty faces of $\calA(L)$: 
$O(m^{2/3}n^{2/3}+n)$~\cite{ref:ClarksonCo90}, $O(n\sqrt{m})$~\cite{ref:EdelsbrunnerOn86}, and
$O(n+m\sqrt{n})$~\cite{ref:EdelsbrunnerOn86}. 
A lower bound of 
$\Omega(m^{2/3}n^{2/3}+n)$~\cite{ref:EdelsbrunnerOn86} is also known, which matches the $O(m^{2/3}n^{2/3}+n)$ upper bound in \cite{ref:ClarksonCo90}. Note that the minimum of the above upper bounds is at most $O(m\log n+n^2)$ regardless of the values of $m$ and $n$. Hence, using the above straightforward approach, $O(m\log n+n^2)$ time is also sufficient to output all non-empty faces explicitly. 
To have a more efficient algorithm, 
Edelsbrunner, Guibas, and Sharir~\cite{ref:EdelsbrunnerTh90} first studied the problem and gave a randomized algorithm of
$O(m^{2/3-\delta}n^{2/3+2\delta}\log n+n\log n\log m)$ expected time,
for any $\delta>0$. Subsequently, an improved deterministic algorithm of $O(m^{2/3}n^{2/3}\log^{5/3}n \log^{1.11}(m/\sqrt{n})+(m+n)\log n)$ time was proposed by Agarwal~\cite{ref:AgarwalPa902}. Also, Agarwal, Matou\v{s}ek,
and Schwarzkopf~\cite{ref:AgarwalCo98} presented a randomized
algorithm of $O(m^{2/3}n^{2/3}\log (n/\sqrt{m})+(m+n)\log n)$ expected time. No progress had been made for a while until  Wang~\cite{ref:WangCo23} revisited the problem in SODA 2022 and derived a new deterministic algorithm of $O(m^{2/3}n^{2/3}\log^{2/3} (n/\sqrt{m})+(m+n)\log n)$ time.

On the other hand, it is not difficult to see that $\Omega(m^{2/3}n^{2/3}+n\log n + m)$ is a lower bound for solving the problem  due to the above $\Omega(m^{2/3}n^{2/3}+n)$ lower bound~\cite{ref:EdelsbrunnerOn86} on the combinatorial complexity of all non-empty faces and also because computing a single face in line arrangements requires $\Omega(n\log n)$ time in the algebraic decision tree model (indeed, a special case of the problem is to compute the lower envelope of all lines, which has an $\Omega(n\log n)$ time lower bound). In particular, in the {\em symmetric case} where $m=n$, $\Omega(n^{4/3})$ is a lower bound since the worst case combinatorial complexity of all non-empty faces is $\Omega(n^{4/3})$~\cite{ref:EdelsbrunnerOn86}.

\paragraph{Our result.}
We propose a new (deterministic) algorithm of $O(m^{2/3}n^{2/3}+(m+n)\log n)$ time for the problem. In the symmetric case, the runtime of the algorithm is $O(n^{4/3})$, which matches the above $\Omega(n^{4/3})$ lower bound, and therefore our algorithm is optimal. For the {\em asymmetric case} where $m\neq n$, using Ben-Or's techniques~\cite{ref:Ben-OrLo83}, we prove that $\Omega(m\log n)$ is also a lower bound for solving the problem under the algebraic decision tree model. Combining with the above $\Omega(m^{2/3}n^{2/3}+n\log n + m)$ lower bound, we thus obtain the $\Omega(m^{2/3}n^{2/3}+(m+n)\log n)$ lower bound for solving the problem. As such, our algorithm for the asymmetric case is also optimal. 
Although our solution only improves the previously best result of Wang~\cite{ref:WangCo23} by a factor of $\log^{2/3} (n/\sqrt{m})$, the result is important for providing the first optimal solution to a fundamental and classical problem in  computational geometry. 

Our algorithm takes a different approach than the previous work. For example, the most recent work of Wang~\cite{ref:WangCo23} first solves the problem in $O(n\log n+m\sqrt{n\log n})$ time in the dual setting. Then, the algorithm is plugged into the framework of Agarwal~\cite{ref:AgarwalPa902} as a subroutine. More specifically, the framework first 
utilizes a cutting of $L$ to divide the problem into a collection of subproblems and then solves these subproblems using Wang's new algorithm.

We dispense with the framework of Agarwal~\cite{ref:AgarwalPa902}. Instead, we propose a new algorithm for the problem in the primal setting, which is different from the algorithm of Wang~\cite{ref:WangCo23} in the dual setting. One key subproblem is to merge convex hulls, for which our techniques rely on a crucial combinatorial observation that certain pairs of convex hull boundaries can intersect at most $O(1)$ times.
Using a hierarchical cutting of $L$~\cite{ref:ChazelleCu93} and combining our new algorithm with Wang's algorithm, we obtain a recursive algorithm that runs in $n^{4/3}\cdot 2^{O(\log^* n)}$ time for the symmetric case $m=n$. 

To further reduce the $2^{O(\log^*n)}$ factor, we resort to the recent techniques from Chan and Zheng~\cite{ref:ChanHo23}: the {\em $\Gamma$-algorithm framework} for bounding algebraic decision tree complexities. More specifically, after $O(1)$ recursive steps in our recursive algorithm, we reduce the problem to $O((n/b)^{4/3})$ subproblems of size $O(b)$ each (i.e., in each subproblem, we need to compute the non-empty faces of an arrangement of $b$ lines for $b$ points), with $b=O(\log\log n)$. As $b$ is very small, solving the problem efficiently under the algebraic decision tree model (i.e., only count the number of comparisons) can lead to an efficient algorithm under the conventional computational model (e.g., the real RAM model). With Chan and Zheng's techniques, we developed an algorithm that can solve each subproblem in $O(b^{4/3})$ decision tree complexity (i.e., the algorithm uses $O(b^{4/3})$ comparisons). Since $b$ is small, we are able to build a decision tree for the algorithm in $O(n)$ time. With the decision tree in hand, each subproblem can then be solved in $O(b^{4/3})$ time in the conventional computation model. Consequently, the original problem can be solved in $O(n^{4/3})$ time. 
In addition, using this algorithm as a subroutine, the asymmetric case of the problem where $m\neq n$ can be solved in $O(m^{2/3}n^{2/3}+(m+n)\log n)$ time.

It should be noted that our algorithm is not a direct application of Chan and Zheng's techniques. Nor it is a simple adaption of their algorithm for Hopcroft's problem. Indeed, in order to fit the $\Gamma$-algorithm framework, we need to design new procedures for a number of subproblems (notably, our techniques involve shaving log factors to merge convex hulls), 
which may be interesting in their own right. 
This is also the case for solving other problems when Chan and Zheng's techniques are applied. For instance,  
Chan, Cheng, and Zheng~\cite{ref:ChanAn24} recently used the technique to tackle the higher-order Voronoi diagram problem and presented an optimal algorithm by improving the previous best algorithm by a factor of $2^{O(\log^* n)}$. To this end, they had to develop new techniques and made a significant effort. On the other hand, there are other problems whose current best solutions have $2^{O(\log^* n)}$ factors, but it is not clear whether the $\Gamma$-algorithm techniques~\cite{ref:ChanHo23,ref:ChanAn24} can be used to further reduce these factors. For instance, the biclique partition problem for a set of $n$ points in the plane can be solved in $n^{4/3}\cdot 2^{O(\log^* n)}$ time~\cite{ref:KatzAn97,ref:WangIm25}. It has been open whether it is possible to have an $O(n^{4/3})$ time algorithm.

\paragraph{Related work.}
Other related problems have also been studied in the literature, e.g., the segment case where $L$ consists of $n$ line segments. Although faces in an arrangement of lines are convex, they may not even be simply connected in an arrangement of segments. Therefore, the segment case becomes more challenging. 
It has been proved that the combinatorial complexity of all non-empty faces of the segment arrangement is bounded by $O(m^{2/3}n^{2/3}+n\alpha(n)+n\log m)$~\cite{ref:AronovTh92} and $O(n\sqrt{m}\alpha(n))$~\cite{ref:EdelsbrunnerAr92}, where $\alpha(n)$ is the inverse Ackermann function; a lower bound $\Omega(m^{2/3}n^{2/3}+n\alpha(n))$~\cite{ref:EdelsbrunnerTh90} was also known. To compute all non-empty faces, 
Edelsbrunner, Guibas, and Sharir~\cite{ref:EdelsbrunnerTh90} first gave a randomized algorithm of $O(m^{2/3-\delta}n^{2/3+2\delta}\log n+n\alpha(n)\log^2 n\log m)$ expected time for any $\delta>0$. Agarwal~\cite{ref:AgarwalPa902} presented an improved deterministic algorithm of $O(m^{2/3}n^{2/3}\log n \log^{2.11}(n/\sqrt{m})+n\log^3 n+ m\log n)$ time. Also, Agarwal, Matou\v{s}ek, and Schwarzkopf~\cite{ref:AgarwalCo98} derived a randomized algorithm of
$O(n^{2/3}m^{2/3}\log^2 n+(n\alpha(n)+n\log m+m)\log n)$ expected time. 
Wang~\cite{ref:WangCo23} proposed an $O(n^{2/3}m^{2/3}\log n+\tau(n\alpha^2(n)+n\log m+m)\log n)$ time deterministic time, where $\tau=\min\{\log m,\log (n/\sqrt{m})\}$. 
An intriguing question is whether our new techniques for the line case can somehow be utilized to tackle the segment case. 
One obstacle, for example, is that the problem for the segment case in the dual setting is not ``cleanly'' defined and thus we do not have a corresponding algorithm in the dual setting. 

In part because of the difficulty of the segment case, a substantial amount of attention has been paid to a special case in which we wish to compute a single face in an arrangement of segments. The problem can be solved in 
$O(n\alpha(n)\log n)$ expected time using a randomized algorithm~\cite{ref:ChazelleCo93}, or in $O(n\alpha^2(n)\log n)$ time using a deterministic algorithm~\cite{ref:AmatoCo95}. These algorithms provide improvements over the previous deterministic methods, which required $O(n\log^2 n)$ time~\cite{ref:MitchellOn90} and $O(n\alpha(n)\log^2 n)$ time~\cite{ref:EdelsbrunnerTh90} respectively.
An open problem in this field has been whether it is possible to have an $O(n\alpha(n)\log n)$ time deterministic algorithm. It is worth noting that computing the upper envelope of all segments can be accomplished more efficiently in $O(n\log n)$ time~\cite{ref:HershbergerFi89}.

In addition, the online query problems have also been studied, i.e., preprocess a set of lines or segments so that the face containing a query point can be quickly reported; see, e.g., \cite{ref:EdelsbrunnerIm89,ref:GuibasCo91,ref:WangCo23}.

\paragraph{Outline.}
The rest of the paper is organized as follows. After introducing some notation and concepts in Section~\ref{sec:pre},
we present an algorithm in Section~\ref{sec:primal} that solves the problem in the primal plane. 
Then, in Section~\ref{sec:dual}, we describe the second algorithm that tackles the problem in the dual plane; this algorithm is essentially Wang's algorithm in \cite{ref:WangCo23} but we slightly modify it to make it recursive. Note that our first algorithm in Section~\ref{sec:primal} is new. We combine the two algorithms in Section~\ref{sec:combine} to obtain our final algorithm; to this end, new procedures under the $\Gamma$-algorithm framework~\cite{ref:ChanHo23} are also developed. The lower bound is proved at the very end of  Section~\ref{sec:combine}.

\section{Preliminaries}
\label{sec:pre}

We follow the same notation as in Section~\ref{sec:intro}, e.g., $L$, $P$, $n$, $m$, $\calA(L)$. 
Our goal is to compute all nonempty cells of $\calA(L)$. 

For ease of discussion, we assume that no point of $P$ lies on a line of $L$. Note that this implies that 
each point of $P$ is in the interior of a face of $\calA(L)$. 
We also make a general position assumption that no line of $L$ is vertical. 
These assumptions can be relaxed without difficulty by the standard perturbation techniques~\cite{ref:EdelsbrunnerSi90,ref:YapSy90}. 

For any point $p$, denote by $F_p(L)$ the face of the arrangement $\calA(L)$ that contains $p$.

For a compact region $R$ in the plane, we often use $P(R)$ to denote the subset of $P$ in $R$, i.e., $P(R)=P\cap R$.

\paragraph{Cuttings.}
A tool that will be frequently used in our algorithm (as well as in the previous work) is cuttings~\cite{ref:ChazelleCu93,ref:MatousekRa93}.
For a compact region $R$ in the plane, we use $L_R$ to denote the subset of lines of $L$ that intersect the interior of $R$ (we also say that these lines {\em cross} $R$ and $L_R$ is the {\em conflict list} of $R$).

A {\em cutting} for $L$ is a collection $\Xi$ of closed cells (each of which is a possibly unbounded triangle) with disjoint interiors, which together cover the entire plane~\cite{ref:ChazelleCu93,ref:MatousekRa93}.
The {\em size} of $\Xi$ is defined to be the number of cells in $\Xi$.
For a parameter $r$ with $1\leq r\leq n$, a cutting $\Xi$ for $L$ is a {\em $(1/r)$-cutting} if $|L_{\sigma}|\leq n/r$ holds for every cell $\sigma\in \Xi$.


We say that a cutting $\Xi'$ {\em $c$-refines} another cutting $\Xi$ if each cell of $\Xi'$ is wholly contained within a single cell of $\Xi$, and if every cell in $\Xi$ encompasses at most $c$ cells from $\Xi'$. 

A hierarchical {\em $(1/r)$-cutting}, characterized by constants $c$ and $\rho$, consists of a series of cuttings $\Xi_0, \Xi_1, \ldots, \Xi_k$ with the following property. $\Xi_0$ has a single cell that is the entire plane. For each $1 \leq i \leq k$, $\Xi_i$ is a $(1/\rho^i)$-cutting of size $O(\rho^{2i})$ that $c$-refines $\Xi_{i-1}$. To make $\Xi_k$ a $(1/r)$-cutting, we select $k = \Theta(\log r)$ to ensure $\rho^{k-1} < r \leq \rho^k$. Consequently, the size of $\Xi_k$ is $O(r^2)$. As already shown in \cite{ref:ChazelleCu93}, it can be easily verified that the total number of cells of all cuttings $\Xi_i$, $0\leq i\leq k$, is also $O(r^2)$, and the total size of the conflict lists $L_{\sigma}$ for all cells $\sigma$ of $\Xi_i$, $0\leq i\leq k$, is bounded by $O(nr)$. 
If a cell $\sigma$ within $\Xi_{i-1}$ contains a cell $\sigma'$ in $\Xi_i$, we call $\sigma$ the {\em parent} of $\sigma'$ and $\sigma'$ a {\em child} of $\sigma$. 
In the following, we often use $\Xi$ to denote the set of all cells of all cuttings $\Xi_i$, $0\leq i\leq k$.

For any $1\leq r\leq n$, a hierarchical $(1/r)$-cutting of size $O(r^2)$ for $L$ (together with the conflict lists $L_{\sigma}$ for all cells $\sigma$ of $\Xi_i$ for all $i=0,1,\ldots,k$) can be computed in
$O(nr)$ time by Chazelle's algorithm~\cite{ref:ChazelleCu93}.




\section{The first algorithm in the primal plane}
\label{sec:primal}
In this section, we present our first algorithm for the problem, which works in the primal plane (this is in contrast to the second algorithm described in Section~\ref{sec:dual}, which tackles the problem in the dual plane). 

For any subset $L'\subseteq L$, let $\calU(L')$ denote the upper envelope of the lines of $L'$. 

Consider a point $p\in P$. Our algorithm needs to output the face $F_p(L)$ of $\calA(L)$ that contains $p$. Let $L_+(p)$ (resp., $L_-(p)$) denote the subset of lines of $L$ that are below (resp., above) $p$. 
It is not difficult to see that the face $F_p(L)$ is the common intersection of the region above the upper envelope $\calU(L_+(p))$ and the region below the lower envelope of $L_-(p)$; see Fig.~\ref{fig:primal}. Our algorithm will compute binary search trees of height $O(\log n)$ that represent $\calU(L_+(p))$ and the lower envelope of $L_-(p)$, respectively. Using the two trees, $F_p(L)$ can be computed in $O(\log n)$ time by computing the two intersections between $\calU(L_+(p))$ and $\calU(L_-(p))$, and $F_p(L)$ can then be output in additional $O(|F_p(L)|)$ time. In what follows, we focus on computing $\calU(L_+(p))$ since the lower envelope of $L_-(p)$ can be treated likewise. In the following discussion, depending on the context, an upper envelope (or lower envelope) may refer to a binary search tree that represents it. For example, the phrase ``computing $\calU(L_+(p))$'' means ``computing a binary search tree that represents $\calU(L_+(p))$''.

\begin{figure}[t]
\begin{minipage}[t]{\textwidth}
\begin{center}
\includegraphics[height=1.4in]{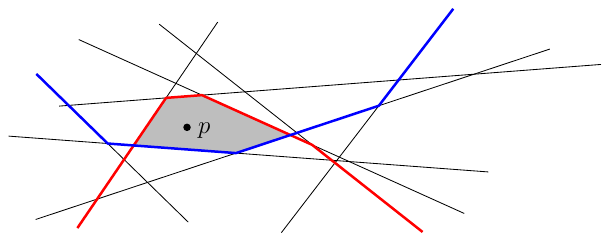}
\caption{\footnotesize Illustrating $F_p(L)$ (the grey cell), $\calU(L_+(p))$ (blue segments), and the lower envelope of $L_-(p)$ (red segments).}
\label{fig:primal}
\end{center}
\end{minipage}
\vspace{-0.15in}
\end{figure}

We start with computing a hierarchical $(1/r)$-cutting $\Xi_0,\Xi_1,\ldots,\Xi_k$ for the lines of $L$ in $O(nr)$ time~\cite{ref:ChazelleCu93}, for a parameter $r\in [1,n]$ to be determined later. Let $\Xi$ denote the set of all cells $\sigma\in \Xi_i$, $0\leq i\leq k$. The algorithm also produces the conflict lists $L_{\sigma}$ for all cells $\sigma\in \Xi$. 
For each cell $\sigma\in \Xi_i$, $1\leq i\leq k$, we define $\hatL_{\sigma}$ as the subset of lines of $L_{\sigma'}$ that are completely below $\sigma$, where $\sigma'$ is the parent of $\sigma$ (and thus $\sigma'$ is a cell of $\Xi_{i-1}$). 
As $L_{\sigma'}$ is already available and $\sigma'$ has $O(1)$ children, computing $\hatL_{\sigma}$ for all children $\sigma$ of $\sigma'$ can be done in $O(|L_{\sigma'}|)$ time by brute force. As such, computing $\hatL_{\sigma}$ for all cells $\sigma\in \Xi$ takes $O(nr)$ time. 

We wish to maintain a binary search tree to represent the upper envelope $\calU(\hatL_{\sigma})$ of $\hatL_{\sigma}$. To compute it in linear time, we need to have lines of $\hatL_{\sigma}$ in sorted order by their slopes. If we sort each $\hatL_{\sigma}$ individually, the total sorting time for all cells $\sigma\in \Xi$ is $O(nr\log n)$. To reduce the $\log n$ factor, we adopt the following strategy. We presort all lines of $L$ in $O(n\log n)$ time. Then, we have the following lemma. 

\begin{lemma}
Suppose we have a sorted list of the lines of $L$ by their slopes. Then, $\hatL_{\sigma}$ for all cells $\sigma\in \Xi$ can be sorted in $O(nr)$ time. 
\end{lemma}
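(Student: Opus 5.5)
The plan is to propagate sortedness top-down through the hierarchy, so that no cell ever sorts from scratch. We first obtain every conflict list $L_{\sigma}$ in slope order, and then read off $\hatL_{\sigma}$ from the sorted list of its parent.

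At the root $\Xi_0$, the single cell is the whole plane and its conflict list is $L$, which is sorted by assumption. Now suppose $\sigma'\in\Xi_{i-1}$ has $L_{\sigma'}$ available in sorted order. Every child $\sigma$ of $\sigma'$ satisfies $\sigma\subseteq\sigma'$, so any line crossing the interior of $\sigma$ also crosses the interior of $\sigma'$. Hence $L_{\sigma}\subseteq L_{\sigma'}$.

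We scan $L_{\sigma'}$ once in slope order. For each line $\ell$ and each of the $O(1)$ children $\sigma$ of $\sigma'$ (recall that the cutting $c$-refines its parent level, with $c$ a constant), we test in $O(1)$ time against the triangle $\sigma$ whether $\ell$ crosses $\sigma$ and whether it lies completely below $\sigma$. We append $\ell$ to the sorted version of $L_{\sigma}$ or of $\hatL_{\sigma}$ accordingly. Because the scan visits lines in slope order, each resulting list is automatically sorted by slope.

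This amounts to a stable filtering of a sorted list, costing $O(|L_{\sigma'}|)$ per parent cell. We would want to avoid any mismatch with the unsorted $L_{\sigma}$ produced by Chazelle's algorithm. To do so, we simply rebuild $L_{\sigma}$ by this filtering, which reproduces exactly the same set, since the crossing test is the definition of membership. Alternatively, we can mark the lines of $L_{\sigma}$ with a flag array indexed by line identifier and filter the sorted $L_{\sigma'}$ by that flag.

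Summing over all cells, the total cost is $O(\sum_{\sigma'\in\Xi}|L_{\sigma'}|)$. As recalled in Section~\ref{sec:pre}, this sum is $O(nr)$ for a hierarchical $(1/r)$-cutting.

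The only delicate point is the flag-array variant. Marking and unmarking must be charged to $|L_{\sigma}|$, and the array of size $n$ must be allocated only once and reused rather than reinitialized per cell. Done this way, the whole procedure stays within $O(nr)$, which proves the lemma.
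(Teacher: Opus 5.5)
Your proof is correct, but it takes a different route from the paper's. The paper does not use the nesting of the hierarchy. It inverts the incidence relation: for each line $\ell$ it builds a list $A_\ell$ of the cells $\sigma$ with $\ell\in\hatL_\sigma$, by one pass over the already computed (unsorted) sets $\hatL_\sigma$. It then empties every $\hatL_\sigma$ and makes a single global pass over the lines in slope order, appending $\ell$ to $\hatL_\sigma$ for each $\sigma\in A_\ell$. The cost is $O(\sum_{\sigma}|\hatL_\sigma|)=O(nr)$.

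That argument is a generic bucket-style trick. It sorts any family of subsets of a presorted ground set in time linear in their total size, and treats the $\hatL_\sigma$ as a black box. Your argument instead uses $L_\sigma\subseteq L_{\sigma'}$ and propagates sortedness top-down by stable filtering. In effect you fold the sorting into the brute-force $O(|L_{\sigma'}|)$ computation of $\hatL_\sigma$ from $L_{\sigma'}$ that the paper already performs. You also get sorted conflict lists $L_\sigma$ as a byproduct. The price is that your argument depends on the hierarchical containment structure and on repeating the $O(1)$-time geometric tests. Both routes give $O(nr)$.

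Your flag-array variant is unnecessary. Rebuilding $L_\sigma$ by the crossing test already reproduces exactly the conflict list, as you note yourself.
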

\begin{proof}
For each line $\ell\in L$, we maintain a linked list $A_{\ell}$ of all cells $\sigma\in \Xi$ such that $\ell\in \hatL_{\sigma}$. Computing all lists $A_{\ell}$ for all lines $\ell\in L$ can be done in $O(nr)$ time by visiting $\hatL_{\sigma}$ for each cell $\sigma\in \Xi$. Next, we ``empty'' each $\hatL_{\sigma}$ and recompute a sorted $\hatL_{\sigma}$ as follows. We process the lists $A_{\ell}$ of all lines $\ell\in L$ following their slope order. For each $A_{\ell}$, for each cell $\sigma\in A_{\ell}$, we add $\ell$ to the rear of the current $\hatL_{\sigma}$. After the lists $A_{\ell}$ of all lines $\ell\in L$ are processed as above, $\hatL_{\sigma}$ of all cells $\sigma\in \Xi$ are sorted. Clearly, the total time is proportional to the total size of $\hatL_{\sigma}$ of all cells $\sigma\in \Xi$, which is $O(nr)$. 
\end{proof}

After each $\hatL_{\sigma}$ is sorted, we compute its upper envelope $\calU(\hatL_{\sigma})$, which takes $O(|\hatL_{\sigma}|)$ time. As such, computing $\calU(\hatL_{\sigma})$ for all cells $\sigma\in \Xi$ can be done in $O(nr)$ time. 

Next, we compute $P(\sigma)$, which is defined to be $P\cap \sigma$, for all cells $\sigma\in \Xi$. This can be done in $O(m\log r)$ time by point locations in a top-down manner in the hierarchical cutting. Specifically, for each point $p\in P$, starting from $\Xi_0$, which comprises a single cell that is the entire plane, suppose the cell $\sigma$ of $\Xi_{i}$ containing $p$ is known; then since $\sigma$ has $O(1)$ cells, locating the cell of $\Xi_{i+1}$ containing $p$ can be done in $O(1)$ time. As such, performing point locations for $p$ takes $O(\log r)$ time.
Note that $\sum_{\sigma\in \Xi}|P_{\sigma}|=O(m\log r)$ as every point of $P$ is stored in a single cell of $\Xi_i$ for each $0\leq i\leq k$.

For each cell $\sigma\in \Xi_i$, $0\leq i\leq k+1$, define $L_+(\sigma)$ as the subset of the lines of $L$ completely below $\sigma$. 
We have the following lemma. 
\begin{lemma}\label{lem:20}
For each cell $\sigma\in \Xi$, suppose $\sigma'$ is the parent of $\sigma$. Then, $L_+(\sigma)=L_+(\sigma')\cup \hatL_{\sigma}$ and thus $\calU(L_+(\sigma))$ is the upper envelope of $\calU(L_+(\sigma'))$ and $\calU(\hatL_{\sigma})$. 
\end{lemma}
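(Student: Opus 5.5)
We prove the set equality $L_+(\sigma)=L_+(\sigma')\cup \hatL_{\sigma}$ by double inclusion. The envelope statement then follows immediately: for any sets $A,B$ of lines, $\calU(A\cup B)$ is the pointwise maximum of $\calU(A)$ and $\calU(B)$. The only facts we use are that $\sigma\subseteq\sigma'$ (a child is contained in its parent) and the definitions of $L_{\sigma'}$ and $\hatL_{\sigma}$.

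For the inclusion ``$\supseteq$'', take a line $\ell\in L_+(\sigma')$. Then $\ell$ lies completely below $\sigma'$, and since $\sigma\subseteq\sigma'$ it also lies completely below $\sigma$; hence $\ell\in L_+(\sigma)$. Next take $\ell\in\hatL_{\sigma}$. By definition $\ell$ belongs to $L_{\sigma'}$ and is completely below $\sigma$, so again $\ell\in L_+(\sigma)$.

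For the inclusion ``$\subseteq$'', take $\ell\in L_+(\sigma)$, so $\ell$ is completely below $\sigma$. We split into cases according to how $\ell$ relates to $\sigma'$. If $\ell$ crosses the interior of $\sigma'$, then $\ell\in L_{\sigma'}$, and since it is completely below $\sigma$ we get $\ell\in\hatL_{\sigma}$. If $\ell$ does not cross the interior of $\sigma'$, then $\sigma'$ lies entirely in one closed halfplane bounded by $\ell$, because $\sigma'$ is convex with nonempty interior and $\ell$ is non-vertical. The interior of $\sigma\subseteq\sigma'$ is nonempty and lies strictly above $\ell$, so that halfplane must be the one above $\ell$. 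Hence $\ell$ is completely below $\sigma'$, that is, $\ell\in L_+(\sigma')$.

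The only delicate point is this last case, namely what ``completely below'' means when $\ell$ touches the boundary of $\sigma'$. We adopt the convention consistent with the conflict-list definition: a line that does not intersect the interior of a cell and lies on its lower side counts as below it. With that convention, both inclusions hold verbatim and the lemma follows.
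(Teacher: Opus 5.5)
Your proof is correct and follows essentially the same route as the paper. Both use double inclusion, with the same case split on whether $\ell$ crosses the interior of $\sigma'$ (which puts it in $\hatL_{\sigma}$) or does not (which puts it in $L_+(\sigma')$). The only difference is that you justify, via convexity of $\sigma'$ and an explicit boundary convention, why a line below $\sigma$ that misses the interior of $\sigma'$ must lie below $\sigma'$, whereas the paper simply asserts this from $\sigma\subseteq\sigma'$.
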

\begin{proof}
Let $\ell$ be a line of $L_+(\sigma)$. By definition, $\ell$ is below $\sigma$. If $\ell$ crosses $\sigma'$, then $\ell$ must be in $\hatL_{\sigma}$ by definition; otherwise, $\ell$ must be below $\sigma'$ since $\sigma\subseteq\sigma'$, and thus $\ell$ is in $L_+(\sigma')$. This proves that $L_+(\sigma)\subseteq L_+(\sigma')\cup \hatL_{\sigma}$. 

On the other hand, consider a line $\ell\in L_+(\sigma')\cup \hatL_{\sigma}$. If $\ell\in L_+(\sigma')$, then by definition, $\ell$ is below $\sigma'$. Since $\sigma\subseteq\sigma'$, $\ell$ must be below $\sigma$ and thus $\ell\in L_+(\sigma)$. If $\ell \in \hatL_{\sigma}$, by definition, $\ell$ is below $\sigma$ and thus $\ell\in L_+(\sigma)$. This proves $L_+(\sigma')\cup \hatL_{\sigma}\subseteq L_+(\sigma)$. 

The lemma thus follows. 
\end{proof}






We wish to compute the upper envelope $\calU(L_+(\sigma))$ for every cell $\sigma$ in the last cutting $\Xi_k$. To this end, we show in the following that $\calU(L_+(\sigma))$ for all cells $\sigma\in \Xi$ can be computed in $O(r^2\log n)$ time (precisely, the algorithm returns a binary search tree that represents $\calU(L_+(\sigma))$ for each cell $\sigma\in \Xi$). 

We work on the hierarchical cutting in a top-down manner. Suppose $\calU(L_+(\sigma'))$ for a cell $\sigma'$ has been computed (which is true initially when $\sigma'$ is the only cell of $\Xi$, in which case $L_+(\sigma')=\emptyset$ and thus $\calU(L_+(\sigma'))=\emptyset$). Then, for each child $\sigma$ of $\sigma'$, we compute $\calU(L_+(\sigma))$ as follows. By Lemma~\ref{lem:20}, $\calU(L_+(\sigma))$ is the upper envelope of $\calU(L_+(\sigma'))$ and $\calU(\hatL_{\sigma})$. The following Lemma~\ref{lem:30} provides an algorithm that computes $\calU(L_+(\sigma))$ based on $\calU(L_+(\sigma'))$ and $\calU(\hatL_{\sigma})$.
The technical crux of the result is an observation that the dual of $\calU(L_+(\sigma'))$, which is the lower hull of the dual points of the lines of $L_+(\sigma')$, has only $O(1)$ intersections with the dual of $\calU(\hatL_{\sigma})$, which is the lower hull of the dual points of the lines of $\hatL_{\sigma}$. Note that the algorithm does not report $\calU(L_+(\sigma))$ explicitly but rather returns a binary search tree representing $\calU(L_+(\sigma))$, which is obtained by splitting and merging binary search trees of $\calU(L_+(\sigma'))$ and $\calU(\hatL_{\sigma})$.

\begin{lemma}\label{lem:30}
$\calU(L_+(\sigma))$ can be obtained from $\calU(L_+(\sigma'))$ and $\calU(\hatL_{\sigma})$ in $O(\log n)$ time. 
\end{lemma}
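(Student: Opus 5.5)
The plan is to first prove the combinatorial fact announced before the statement: the two envelopes $\calU_1=\calU(L_+(\sigma'))$ and $\calU_2=\calU(\hatL_{\sigma})$, viewed as convex piecewise-linear functions of $x$, cross only $O(1)$ times. Given the crossings, $\calU(L_+(\sigma))$ is obtained by cutting both trees at the crossings and concatenating the alternating pieces. I would carry out the crossing argument in the dual. There a line $y=ax+b$ becomes the point $(a,-b)$, an upper envelope becomes a lower hull, and "line $\ell$ below point $q$" becomes "dual point $\ell^*$ lies on a fixed side of the dual line $q^*$".

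Let $v_1,v_2,v_3$ be the $O(1)$ vertices of $\sigma'$; unbounded cells are handled by treating directions at infinity as extra constraints. A line is completely below $\sigma'$ iff it is below all $v_j$. So $S_1=L_+(\sigma')^*$ lies in $D_1$, the intersection of the three dual halfplanes determined by the $v_j^*$; this is a convex region bounded by $O(1)$ lines. Every line of $\hatL_{\sigma}$ crosses $\sigma'$, so it is above at least one vertex $v_j$. Hence the set $S_2=\hatL_{\sigma}^*$ splits into $O(1)$ groups $S_2^{(j)}$, and each group lies strictly on the opposite side of the line $v_j^*$ from all of $S_1$.

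For each group, $\mathrm{conv}(S_1)$ and $\mathrm{conv}(S_2^{(j)})$ are separated by a line. From this I would derive two facts by a standard argument on separated convex sets. First, the lower hulls of $S_1$ and $S_2^{(j)}$ cross $O(1)$ times. Second, in the lower hull of $S_1\cup S_2^{(j)}$ the vertices from the two sets form $O(1)$ contiguous runs; this follows because the bridges are common tangents, of which two separated convex polygons have at most four. Merging the $O(1)$ groups one at a time then shows that the lower hull of $S_1\cup S_2$, i.e.\ the dual of $\calU(L_+(\sigma))$, consists of $O(1)$ contiguous runs alternately taken from $\calU_1$ and $\calU_2$. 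The step needing care is that $\calU_2$ is the envelope of all of $\hatL_{\sigma}$, not of each group separately. To avoid this I would argue directly about the combined hull: each bridge edge between an $S_1$-vertex and an $S_2$-vertex is a common tangent of $\mathrm{conv}(S_1)$ and some $\mathrm{conv}(S_2^{(j)})$, so there are $O(1)$ such bridges.

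The algorithm then has three parts. First, find each bridge, equivalently each primal crossing point of $\calU_1$ and $\calU_2$, in $O(\log n)$ time. I would use the Overmars--van Leeuwen style simultaneous binary search for common tangents of two convex chains, walking down both balanced trees. Each step inspects a vertex of each chain and the slopes of the neighboring edges, and discards half of one chain. The separating line $v_j^*$ guarantees that the case analysis for separated hulls applies. Second, split the two trees at the $O(1)$ bridge endpoints and join the surviving pieces in order, with $O(\log n)$ split/join operations in total. Third, since $\calU(L_+(\sigma'))$ must remain usable for the other children of $\sigma'$, make the trees persistent by path copying, so splitting and joining leave the parent's tree intact and still cost $O(\log n)$ per operation. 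The same argument works with the roles mirrored for the lower envelope of $L_-(p)$.

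The main obstacle is locating the bridges in $O(\log n)$ time. The difference $\calU_2-\calU_1$ is not monotone, so a naive binary search over $x$ does not work. The tangent-finding search is only guaranteed correct for hulls that are separated, whereas $S_2$ as a whole is only "separated in $O(1)$ pieces". My first attempt would be to exploit the fact that the separating lines all come from the $O(1)$ vertices of $\sigma'$. Each such line $v_j^*$ cuts both chains at a single point, found by binary search in $O(\log n)$ time. These cut points partition the $x$-range (slope range in the dual) into $O(1)$ intervals, and within each interval one chain is entirely on one side of the corresponding separating line. I would then run the separated-hull tangent search interval by interval.
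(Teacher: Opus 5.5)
Your combinatorial bound is correct, and it is a valid substitute for the paper's count of at most four crossings between the two dual lower hulls. A bridge $s_1s_2$ of the combined lower hull with $s_2\in S_2^{(j)}$ supports both $\mathrm{conv}(S_1)$ and $\mathrm{conv}(S_2^{(j)})$ from below. Two line-separated convex sets have at most two such common tangents, so there are $O(1)$ bridges.

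The gap is the part the lemma is actually about: an $O(\log n)$ procedure that finds these bridges. You leave this as a ``first attempt'', and as written it does not work.
\begin{itemize}
\item The cut points are not as you describe. The lines $v_j^*$ never meet the lower hull of $S_1$, since it lies strictly above every $v_j^*$. A single $v_j^*$ can meet the lower hull of $S_2$ twice, or not at all.
\item More seriously, a bridge is a global object. Its two endpoints may lie in different intervals. Conversely, a common tangent of the two subchains inside one interval need not be an edge of the combined hull, because vertices in other intervals can lie below it.
\end{itemize}
So a separated-hull tangent search run interval by interval neither finds all bridges nor certifies the ones it returns. You also give no step that assembles the per-interval answers into the final hull.

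The missing idea is to reduce everything to merges of chains separated by \emph{vertical} lines. Let $E$ be the upper envelope of the lines $v_j^*$. It has at most two vertices, the lower hull of $S_1$ lies above $E$, and every point of $S_2$ lies below $E$. The paper computes crossings of the two lower hulls rather than bridges:
\begin{itemize}
\item An edge of the lower hull of $S_2$ whose endpoints lie in the same slab of $E$ stays below $E$, so it cannot meet the lower hull of $S_1$.
\item Hence only the at most two edges straddling the $x$-coordinates of the vertices of $E$ matter. They are found by binary search in $x$, and each is intersected with the convex lower hull of $S_1$ in $O(\log n)$ time. This gives at most four crossings.
\item Splitting both trees at these $x$-coordinates yields at most five vertical slabs. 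In each slab one chain lies entirely below the other on their common range.
\item Keep the lower piece in each slab and merge consecutive pieces with the standard vertically separated Overmars--van Leeuwen lower common tangent. This gives the hull in $O(\log n)$ time.
\end{itemize}
Alternatively, your slab idea can be repaired. Cut both chains at the $x$-coordinates of the vertices of $E$; within each slab the two pieces are then separated by a single line $v_j^*$. Merge each such pair with an $O(\log n)$ common-tangent routine for line-separated convex chains. Then merge the at most three resulting slab hulls, which are separated by vertical lines. That final cross-slab merge is exactly the step your proposal lacks.
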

\begin{proof}
Let $\calD(L_+(\sigma'))$ denote the set of dual points of the lines of $L_+(\sigma')$ and
$\calD(\hatL_{\sigma})$ the set of dual points of the lines of $\hatL_{\sigma}$.
Let $\calL(L_+(\sigma'))$ be the lower hull of the points of $\calD(L_+(\sigma'))$ and 
$\calL(\hatL_{\sigma})$ the lower hull of the points of $\calD(\hatL_{\sigma})$.
Note that $\calL(L_+(\sigma'))$ is dual to $\calU(L_+(\sigma'))$ while $\calL(\hatL_{\sigma})$ is dual to $\calU(\hatL_{\sigma})$. 

Our algorithm relies on the following {\em crucial observation:} $\calL(L_+(\sigma'))$ and $\calL(\hatL_{\sigma})$ cross each other at most $4$ times. In what follows, we first prove the observation and then discuss the algorithm. 

\paragraph{Proving the crucial observation.}
Let $Q$ be the set of vertices of the cell $\sigma'$. Let $Q^*$ be the set of dual lines of $Q$. As discussed in Section~\ref{sec:pre}, every cell of $\Xi$ is a (possibly unbounded) triangle. Hence, $|Q|\leq 3$. Since $\hatL_{\sigma}\subseteq L_{\sigma'}$, each line $\ell$ of $\hatL_{\sigma}$ must cross $\sigma'$. Therefore, $\ell$ is above at least one point of $Q$, and thus the dual point $\ell^*$ of $\ell$ must be below at least one dual line of $Q^*$. As such, all points of $\calD(\hatL_{\sigma})$ are below the upper envelope $\calU(Q^*)$ of $Q^*$. 

On the other hand, for each line $\ell$ of $L_+(\sigma')$, by definition, $\ell$ is completely below the cell $\sigma'$. Hence, the dual point $\ell^*$ of $\ell$ must be above all dual lines of $Q^*$. Therefore, all dual points of $\calD(L_+(\sigma'))$ are above the upper envelope $\calU(Q^*)$ of $Q^*$.

The above shows that $\calU(Q^*)$ separates $\calD(\hatL_{\sigma})$ from $\calD(\hatL_{\sigma})$: All points of   $\calD(\hatL_{\sigma})$ are below $\calU(Q^*)$ while all points of $\calD(L_+(\sigma'))$ are above $\calU(Q^*)$; see Fig.~\ref{fig:crucialobser}.

In what follows, we argue that the two lower hulls $\calL(L_+(\sigma'))$ and $\calL(\hatL_{\sigma})$ cross each other at no more than $4$ points. 

\begin{figure}[t]
\begin{minipage}[t]{\textwidth}
\begin{center}
\includegraphics[height=1.1in]{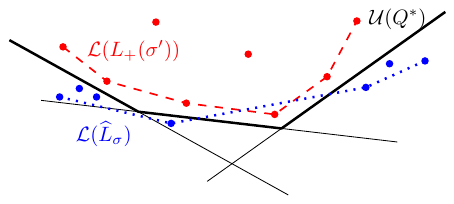}
\caption{\footnotesize Illustrating the proof of the crucial observation in Lemma~\ref{lem:30}. The three solid lines are those in $Q^*$. All blue points are those in $\calD(\hatL_{\sigma})$ while all red points are those in $\calD(L_+(\sigma'))$.}
\label{fig:crucialobser}
\end{center}
\end{minipage}
\vspace{-0.15in}
\end{figure}

Let $R$ be the region of the plane above $\calU(Q^*)$. Hence, all points of $\calD(L_+(\sigma'))$ are in $R$ and thus their lower hull $\calL(L_+(\sigma'))$ is completely inside $R$, implying that  $\calL(L_+(\sigma'))$ does not intersect $\calU(Q^*)$. 

On the other hand, since $|Q^*|\leq 3$, $\calU(Q^*)$ has at most two vertices. We claim that at most two edges of $\calL(\hatL_{\sigma})$ intersect $\calU(Q^*)$. Indeed, vertical lines through vertices of $\calU(Q^*)$ partition the plane into at most three slabs. Consider an edge $e$ connecting two adjacent vertices $u$ and $v$ of $\calL(\hatL_{\sigma})$. The edge $e$ intersects $\calU(Q^*)$ only if $u$ and $v$ are in different slabs. It is not difficult to see that the number of such edges is at most $2$. 

For any edge $e$ of $\calL(\hatL_{\sigma})$, if it does not intersect $\calU(Q^*)$, then $e$ must be completely below $\calU(Q^*)$ and thus cannot intersect $\calL(L_+(\sigma'))$. Hence, only edges of $\calL(\hatL_{\sigma})$ intersecting $\calU(Q^*)$ are possible to have intersections with $\calL(L_+(\sigma'))$.
Consider an edge $e$ of $\calL(\hatL_{\sigma})$ that intersects $\calU(Q^*)$. We argue that $e$ can cross $\calL(L_+(\sigma'))$ at most twice. 
Let $e'=e\cap R$. Clearly, among all points of $e$, only those on $e'$ can be possibly on $\calL(L_+(\sigma'))$ as $\calL(L_+(\sigma'))$ is inside $R$. Since $R$ is convex, $e'$ is a single line segment. Because $\calL(L_+(\sigma'))$ is convex, $e'$ crosses $\calL(L_+(\sigma'))$ at most twice. 
Since at most two edges of $\calL(\hatL_{\sigma})$ intersect $\calU(Q^*)$, 
$\calL(\hatL_{\sigma})$ and $\calL(L_+(\sigma'))$ cross each other at no more than $4$ points. 

This proves the crucial observation. 

\paragraph{The algorithm.}
Let $\calL$ be the lower hull of all points of $\calD(L_+(\sigma'))$ and
$\calD(\hatL_{\sigma})$. Note that $\calL$ is dual to $\calU(L_+(\sigma))$. Hence, to compute $\calU(L_+(\sigma))$, it suffices to compute $\calL$. By the virtue of the above crucial observation, we show that $\calL$ can be obtained in $O(\log n)$ time from $\calL(L_+(\sigma'))$ and $\calL(\hatL_{\sigma})$. 

Let $A$ be the set of at most four intersection points between  $\calL(L_+(\sigma'))$ and $\calL(\hatL_{\sigma})$. We first compute $A$. After having $\calU(Q^*)$, which takes $O(1)$ time to compute, for each vertex $v$ of $\calU(Q^*)$, we find the edge $e$ (if exists) of $\calL(\hatL_{\sigma})$ whose left endpoint is left of $v$ and right endpoint is right of $v$. The edge $e$ can be determined in $O(\log n)$ time using the binary search tree of $\calL(\hatL_{\sigma})$. Then, we compute the at most two intersections (if any) between $e$ and $\calL(L_+(\sigma'))$, which again can be done in $O(\log n)$ time. 
In this way, $A$ can be computed in $O(\log n)$ time since $\calU(Q^*)$ has at most two vertices. 

Vertical lines through the points of $A$ partition the plane into at most five slabs. For each slab $B$, the portion of $\calL(\hatL_{\sigma})$ in $B$ is either completely above or completely below the portion of $\calL(L_+(\sigma'))$ in $B$. Hence, the lower hull of the points of $\calD(L_+(\sigma'))\cup \calD(\hatL_{\sigma})$ in each slab $B$ is already available. More specifically, we split the binary search tree for $\calL(\hatL_{\sigma})$ at the $x$-coordinates of points of $A$ and obtain at most five trees representing portions of $\calL(\hatL_{\sigma})$ in the slabs, respectively. We do the same for $\calL(L_+(\sigma'))$. As such, we have a binary search tree representing the lower hull of the points of $\calD(L_+(\sigma'))\cup \calD(\hatL_{\sigma})$ in each slab $B$. The lower hull $\calL$ of all points can thus be obtained by merging these lower hulls in the slabs by computing their lower common tangents. Specifically, we can start with merging the lower hulls of the leftmost two slabs by computing their lower common tangent, which can be accomplished in $O(\log n)$ time~\cite{ref:OvermarsMa81} since the two lower hulls are separated by a vertical line. After the leftmost two hulls are merged, we continue to merge the third one, and so on. In this way, $\calL$ can be computed in $O(\log n)$ time. 
\end{proof}

By virtue of Lemma~\ref{lem:30}, we can compute $\calU(L_+(\sigma))$ for all cells $\sigma\in \Xi$ in $O(|\Xi|\cdot \log n)$ time, which is $O(r^2\log n)$ as $\Xi$ has $O(r^2)$ cells. 

Consider a cell $\sigma\in \Xi_{k}$. 
For any point $p\in P(\sigma)$, let $L^+_{\sigma}(p)$ denote the subset of lines of $L_{\sigma}$ below $p$. It is easy to see that $L_+(p)=L_+(\sigma)\cup L^+_{\sigma}(p)$. Hence, $\calU(L_+(p))$ is the lower envelope of $\calU(L_+(\sigma))$ and $\calU(L^+_{\sigma}(p))$. The above already computes $\calU(L_+(\sigma))$. Suppose $\calU(L^+_{\sigma}(p))$ is also available. Then, we compute $\calU(L_+(p))$ by merging $\calU(L_+(\sigma))$ and $\calU(L^+_{\sigma}(p))$ in $O(\log n)$ time by the algorithm of Lemma~\ref{lem:30}. Indeed, since all lines of $L^+_{\sigma}(p)$ cross $\sigma$ while all lines of $L_+(\sigma)$ are completely below $\sigma$, the same algorithm as Lemma 4 is also applicable here. Consequently, once the upper envelopes $\calU(L^+_{\sigma}(p))$ for all points $p\in P$ are computed, $\calU(L_+(p))$ for all $p\in P$ can be computed in additional $O(m\log n)$ time. 

It remains to compute $\calU(L^+_{\sigma}(p))$. For this, we recursively apply the above algorithm on $L_{\sigma}$ and $P(\sigma)$. A subtle issue is that while $|L_{\sigma}|\leq n/r$, we do not have a good upper bound for $P(\sigma)$. To address this issue, we do the following. If $|P(\sigma)|>m/r^2$, then we arbitrarily partition $P(\sigma)$ into groups of at most $m/r^2$ points each. Since $\Xi_k$ has $O(r^2)$ cells, the number of groups thus created for all cells $\sigma\in \Xi_k$ is still bounded by $O(r^2)$. Now for each cell $\sigma\in \Xi_k$, for each group $P'$ of $P(\sigma)$, we apply the above algorithm recursively on $P'$ and $L_{\sigma}$. As such, we obtain the following recurrence for the runtime of the entire algorithm excluding the time for presorting $L$, for any $1\leq r\leq n$: 
\begin{equation}\label{equ:10}
    T(m,n)=O(nr + m\log n+r^2\log n)+O(r^2)\cdot T(m/r^2,n/r).
\end{equation}

Instead of solving the recurrence now, we will use the recurrence later in our combined algorithm in Section~\ref{sec:combine}.

\section{The second algorithm in the dual plane} 
\label{sec:dual}
In this section, we discuss another algorithm, which deals with the problem in the dual plane. It mostly follows Wang's algorithm in \cite{ref:WangCo23} with a slight change at the very end of this section (more specifically, we make Wang's algorithm recursive). 

Let $P^*$ be the set of lines dual to the points of $P$ and $L^*$ the set of points dual to the lines of $L$.
Consider a point $p\in P$. Recall that $F_p(L)$ is the face of $\calA(L)$ that contains $p$. 
Without loss of generality, we assume that the dual line $p^*$ of $p$ is horizontal. 
In the dual plane, $p^*$ divides the points of $L^*$ into two subsets and the portions of the convex hulls of the two subsets between their inner common tangents are dual to $F_p(L)$~\cite{ref:AgarwalPa902,ref:EdelsbrunnerIm89}; see Fig~\ref{fig:dual}.
Specifically, define $L^*_+(p^*)$ (resp., $L^*_-(p^*)$) to be the subset of points of $L^*$ above (resp., below) $p^*$. Let $H_+(p^*)$  be the lower hull of the convex hull of $L^*_+(p^*)$ and $H_-(p^*)$ the upper hull of $L^*_-(p^*)$; see Fig~\ref{fig:dual}. 
Hence, the boundary of $F_p(L)$ is dual to the portions of $H_+(p^*)$ and $H_-(p^*)$ between their inner common
tangents; let $F^*_p(L)$ denote the dual of $F_p(L)$.

\begin{figure}[t]
\begin{minipage}[t]{\textwidth}
\begin{center}
\includegraphics[height=1.5in]{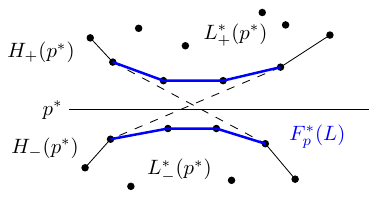}
\caption{\footnotesize Illustrating the notation in the dual plane: $F^*_p(L)$, which is dual to $F_p(L)$, is composed of the blue edges between the two inner common tangents (the dashed segments).}
\label{fig:dual}
\end{center}
\end{minipage}
\vspace{-0.15in}
\end{figure}

The algorithm will compute binary search trees of height $O(\log n)$ to represent 
$H_+(p^*)$ and $H_-(p^*)$, respectively. Using the trees, their inner common tangents can be computed in $O(\log m)$ time~\cite{ref:GuibasCo91} and then $F^*_p(L)$ can be reported in additional $O(|F^*_p(L)|)$ time. Below we only discuss how to compute $H_+(p^*)$ (i.e., compute the tree for it). 

The algorithm starts to compute a hierarchical $(1/r)$-cutting $\Xi_0,\Xi_1,\ldots,\Xi_k$
for the lines of $P^*$, with $r\in [1,m]$ to be determined later. Let $\Xi$ denote the set of all cells of $\Xi_i$, for all $0\leq i\leq k$. We also keep the conflict lists $P^*_\sigma$
for all cells $\sigma\in \Xi$. All of these can be done in $O(mr)$ time~\cite{ref:ChazelleCu93}. As discussed in Section~\ref{sec:pre}, $k=O(\log r)$.
We next compute the set $L^*(\sigma)$, which is defined to be $L^*\cap \sigma$, for all cells $\sigma\in \Xi$. This can be done in $O(n\log r)$ time by point locations following the hierarchical cutting in the same way as discussed in Section~\ref{sec:primal}. 
If we sort points of $L^*$ by $x$-coordinate initially, then we can obtain the sorted lists of $L^*(\sigma)$ of all cells $\sigma\in \Xi$ in $O(n\log r)$ time. Indeed, if we perform point locations for points of $L^*$ following their sorted order, then points of each $L^*(\sigma)$ are automatically sorted.
Using the sorted lists, for each cell $\sigma\in \Xi$, we compute the convex hull of $L^*(\sigma)$ in linear time and store it in a (balanced) binary search tree. All above takes $O(mr+n\log r)$ time (excluding the $O(n\log n)$ time for presorting $L^*$).

In addition, for each cell $\sigma$ of the last cutting $\Xi_k$, if $|L^*(\sigma)|>n/r^2$, then we further partition $\sigma$ into $\Theta(|L^*(\sigma)|\cdot r^2/n)$ cells each containing at most $n/r^2$ points of $L^*$. This can be easily done in $O(n)$ time for all cells of $\Xi_k$ since points of each $L^*(\sigma)$ are already sorted.
For convenience, we use $\Xi_{k+1}$ to refer to the set of all new cells obtained above. Note that $\Xi_{k+1}$ has $O(r^2)$ cells. For each cell $\sigma'\in \Xi_{k+1}$, if $\sigma'$ is in the cell $\sigma$ of $\Xi_k$, then we also say that $\sigma$ is the {\em parent} of $\sigma'$ and $\sigma'$ is a {\em child} of $\sigma$. 
We also define $L^*(\sigma')=L^*\cap \sigma'$, and maitain the convex hull of $L^*(\sigma')$ for each cell $\sigma'\in \Xi_{k+1}$. This takes $O(n)$ time for all cells $\sigma'$ of $\Xi_{k+1}$ as $L^*$ is already sorted.


The algorithm next processes each line $p^*\in P^*$ as follows. Without loss of generality, we assume that $p^*$ is horizontal. Denote by $\Psi(p^*)$ the set of  cells $\sigma$ of $\Xi_i$ crossed by $p^*$, for all $0\leq i\leq k$. Define $\Psi_{k+1}(p^*)$ to be the set of cells $\sigma$ of $\Xi_{k+1}$ crossed by $p^*$. For each cell $\sigma\in \Psi_{k+1}(p^*)$, denote by $\sigma_+(p^*)$ the portion of $\sigma$ above $p^*$, and let $L^*(\sigma_+(p^*))=L^*\cap \sigma_+(p^*)$. 


Define $\Psi_+(p^*)=\{\sigma \ | \ \text{$\sigma$'s parent is in $\Psi(p^*)$ and $\sigma$ is completely above the line $p^*$}\}$.

It is proved in \cite{ref:WangCo23} that 
$L^*_+(p^*)$ is the union of $\bigcup_{\sigma\in \Psi_+(p^*)}L^*(\sigma)$ and $\bigcup_{\sigma\in \Psi_{k+1}(p^*)}L^*(\sigma_+(p^*))$.
As such, if we have convex hulls of $L^*(\sigma)$ for all cells $\sigma\in \Psi_+(p^*)$ and convex hulls of $L^*(\sigma_+(p^*))$ for all cells $\sigma\in \Psi_{k+1}(p^*)$, then $H_+(p^*)$ is the lower hull of the vertices of all these convex hulls. 
Define $\calH^1_+(p^*)$ to be the set of convex hulls of $L^*(\sigma)$ for all cells $\sigma\in \Psi_+(p^*)$ and $\calH^2_+(p^*)$ the set of convex hulls of $L^*(\sigma_+(p^*))$ for all cells $\sigma\in \Psi_{k+1}(p^*)$. Let $\calH_+(p^*)=\calH^1_+(p^*)\cup \calH^2_+(p^*)$. Thus, $H_+(p^*)$ is the lower hull of the vertices of all convex hulls of $\calH_+(p^*)$. 

It has been proved in \cite{ref:WangCo23} that $\sum_{p^*\in P^*}|\calH_+(p^*)|=O(mr)$ and convex hulls of $\calH_+(p^*)$ are pairwise disjoint. Using this property, once we have (binary search trees for representing) convex hulls of $\calH_+(p^*)$, $H_+(p^*)$ can be computed in additional $O(mr\log n)$ time~\cite{ref:WangCo23} (details about this step will be further discussed in Section~\ref{sec:combine}). 
Thanks to our preprocessing above on the hierarchical cutting, 
convex hulls of $\calH^1_+(p^*)$ can be obtained in $O(mr)$ time. 
For computing the convex hulls of $\calH^2_+(p^*)$, Wang's algorithm~\cite{ref:WangCo23} does so by brute force, which takes $O(mn/r)$ time. 
As such, the total time of the algorithm is $O(n\log n+n\log r+ mr\log n+mn/r)$ (in particular, the $O(n\log n)$ factor is for presorting $L^*$). Setting $r=\sqrt{n/\log n}$ leads to the $O(n\log n+m\sqrt{n\log n})$ time complexity given in \cite{ref:WangCo23}.

\paragraph{Modifying Wang's algorithm.} We modidfy Wang's algorithm as follows. Instead of computing the convex hulls of $\calH^2_+(p^*)$ by brute force, we compute them recursively. 
Each subproblem, which is on $P^*_{\sigma}$ and $L^*(\sigma)$ for each cell $\sigma\in \Xi_{k+1}$, is the following: Compute the lower hull of $L^*(\sigma_+(p^*))$ for every line $p^*\in P^*_{\sigma}$. As $\Xi_{k+1}$ has $O(r^2)$ cells, $|P^*_{\sigma}|\leq m/r$, and $|L^*(\sigma)|\leq n/r^2$, we obtain the following recurrence for the runtime of the whole algorithm (excluding the time for presorting $L^*$), for any $1\leq r\leq m$:
\begin{equation}\label{equ:20}
    T(m,n)=O(mr\log n+n\log r)+O(r^2)\cdot T(m/r,n/r^2).
\end{equation}

We will use the recurrence in our combined algorithm in Section~\ref{sec:combine}.

\paragraph{Remark.} 
The term $O(mr\log n)$ in \eqref{equ:20} is due to the procedure of computing the lower hull $H_+(p^*)$ by merging the convex hulls of $\calH_+(p^*)$, for all $p^*\in P^*$. Recall that $\sum_{p^*\in P^*}|\calH_+(p^*)|=O(mr)$. If this procedure could be performed in $O(mr)$ time, then the term in \eqref{equ:20} would become $O(mr)$. 


\section{Combining the two algorithms}
\label{sec:combine}
In this section, we combine the two algorithms presented in the last two sections to obtain a final algorithm to compute all non-empty faces of $\calA(L)$ for the points of $P$. 

We first discuss the symmetric case where $m=n$. If we apply~\eqref{equ:20} and then \eqref{equ:10} using the same $r$, we can obtain the following recurrence:
\begin{equation*}
    T(n,n)=O(nr\log n + r^4\log (n/r^2))+O(r^4)\cdot T(n/r^3,n/r^3).
\end{equation*}
Setting $r=n^{1/3}/\log n$ leads to the following
\begin{equation}\label{equ:30}
    T(n,n)=O(n^{4/3})+O((n/\log^3 n)^{4/3})\cdot T(\log^3n,\log^3n).
\end{equation}
The recurrence solves to $T(n,n)=n^{4/3}\cdot 2^{O(\log^*n)}$.
In the following, we improve the algorithm to $O(n^{4/3})$ time using the $\Gamma$-algorithm framework of Chan and Zheng~\cite{ref:ChanHo23}.

\subsection{Improvement}

At the outset, we apply the recurrence \eqref{equ:40} one more time to obtain the following:
\begin{align}\label{equ:40}
  T(n,n)= O(n^{4/3}) + O((n/b)^{4/3})\cdot T(b,b),
\end{align}
where $b=(\log\log n)^3$.

As $b$ is very small, we show that after $O(n)$ time preprocessing, we can solve each subproblem $T(b,b)$ in
$O(b^{4/3})$ time. 
By integrating this outcome into \eqref{equ:40}, we derive the result
$T(n,n)=O(n^{4/3})$.

More precisely, we will demonstrate that following a preprocessing step requiring $O(2^{\text{poly}(b)})$ time, where $\text{poly}(\cdot)$ represents a polynomial function, each subproblem $T(b,b)$ can be solved by performing only $O(b^{4/3})$ comparisons. Alternatively, we can solve $T(b,b)$ using an algebraic decision tree with a height of $O(b^{4/3})$. Given that $b = (\log\log n)^3$, the term $2^{\text{poly}(b)}$ remains bounded by $O(n)$. To adapt this approach to the conventional computational model (such as the standard real-RAM model), we explicitly construct the decision tree for the aforementioned algorithm. This construction, which can also be viewed as part of the preprocessing for solving $T(b,b)$, can be accomplished in $O(2^{\text{poly}(b)})$ time. Consequently, with just $O(n)$ time spent on preprocessing, we can efficiently solve each subproblem $T(b,b)$ in $O(b^{4/3})$ time. In the subsequent discussion, for notational convenience, we will use $n$ instead of $b$. Our objective is to establish the following lemma.
\begin{lemma}\label{lem:40}
After $O(2^{\poly(n)})$ time preprocessing, $T(n,n)$
can be solved using $O(n^{4/3})$ comparisons.
\end{lemma}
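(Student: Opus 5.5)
We plan to use the $\Gamma$-algorithm framework of Chan and Zheng~\cite{ref:ChanHo23}. The aim is to rerun the combined recursion \eqref{equ:20}+\eqref{equ:10}, but to count only comparisons, so that every step whose \emph{time} carries a $\log n$ factor is charged only linearly in comparisons. Looking at the recurrences, three terms carry $\log n$: the $O(r^2\log n)$ and $O(m\log n)$ terms in \eqref{equ:10}, which come from Lemma~\ref{lem:30} merges and point locations, and the $O(mr\log n)$ term in \eqref{equ:20}, which comes from merging the $O(mr)$ pairwise disjoint convex hulls of $\calH_+(p^*)$ over all $p^*\in P^*$. By the Remark after \eqref{equ:20}, if this merging used only $O(mr)$ comparisons, the symmetric recurrence would become $T(n,n)=O(nr+r^4)+O(r^4)T(n/r^3,n/r^3)$. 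With $r=n^{1/3}/c$ for a large constant $c$, or with $r$ a small polynomial in $n$ and the recursion depth bounded, this sums geometrically to $O(n^{4/3})$ comparisons.

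The plan is as follows. First, the preprocessing: since the input has size $n$, we may spend $2^{\poly(n)}$ time on precomputation. Concretely, we build the hierarchical cuttings combinatorially for every possible order type of the input, or, more simply, we charge cutting construction and point location to the $\Gamma$-framework. Chan--Zheng show that in a $\Gamma$-algorithm, operations such as point location, sorting, and cutting construction cost comparisons proportional to the entropy, $\log_2$ of the number of outcomes. The cutting construction itself costs $O(nr)$ comparisons by Chazelle's algorithm. Second, we handle the Lemma~\ref{lem:30} steps in the primal recursion. Each such step merges two hulls whose boundaries cross $O(1)$ times. Over all $O(r^2)$ cells, these $O(r^2)$ binary searches are independent, so they can be batched. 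In the $\Gamma$-framework the cost of a batch of searches is bounded by $\log$ of the number of possible combined outcomes. Choosing $r$ so that $r^2\log n\le n^{4/3}$ already absorbs this term: we need $r\le n^{2/3}/\sqrt{\log n}$ in the primal step, which holds with slack. The $m\log n$ term becomes $n\log n=o(n^{4/3})$ and is harmless.

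The main obstacle is the dual merging step, namely computing each $H_+(p^*)$ from the disjoint hulls in $\calH_+(p^*)$ with $O(|\calH_+(p^*)|)$ amortized comparisons instead of $O(|\calH_+(p^*)|\log n)$. My first attempt exploits that the hulls are disjoint and come from cells of a hierarchical cutting that are ordered along $p^*$. Each common tangent between consecutive hull pieces is determined by a binary search whose total number of outcomes, over all $p^*$ and all tangents, is only $2^{O(mr)}\cdot n^{O(1)\cdot\text{(few)}}$. To see this, note that the lower hull $H_+(p^*)$ changes only combinatorially as $p^*$ moves, so the joint outcome vector has low entropy. We would then apply the Chan--Zheng lemma that a collection of searches whose joint answer lies in a set of size $N$ can be performed with $O(\log N + \text{\#searches})$ comparisons. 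The difficulty is bounding $N$. For this we would argue that the $x$-order of tangent contact vertices across all $p^*$ is determined by the arrangement of $O(mr)$ lines (dual lines and tangent lines), which has $2^{O(mr\log(\cdot))}$ configurations. This bound is only acceptable if the logarithm is over quantities of size $O(n/r^2)$, a local log. Choosing $r$ so that $n/r^2$ is tiny, say $r=n^{1/2}/\text{small}$ in the dual step, would make the $\log$ factor constant.

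If that entropy argument fails, the fallback is a rebalancing of $r$ across the two recurrences so that every $\log$-factor term is dominated by $n^{4/3}$ up to a constant. For example, we would use different $r$ values, $r_1=n^{1/3}$ for \eqref{equ:20} and $r_2$ for \eqref{equ:10}, so that sizes shrink polynomially. We would then iterate $O(1)$ times down to constant-size subproblems, which are solved in $O(1)$ comparisons after table lookup in the $2^{\poly(n)}$ preprocessing.
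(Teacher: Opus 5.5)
There is a genuine gap. The step you identify as the main obstacle is never resolved, and the tool you invoke for it is not one the $\Gamma$-framework provides. No lemma says a batch of searches costs $O(\log N+\mbox{number of searches})$ comparisons, with $N$ the number of joint outcomes, and you never bound $N$ in any case. Entropy is not the real issue, since the potential is globally $O(n\log n)$. The real issue is to carry out each lower-common-tangent computation with $O(1)$ comparisons plus a $-X\cdot\Delta\Phi$ term for a moderate $X$, using only predicates of constant degree in $O(1)$ input values. A tangent search does not fit the search lemma directly, because the natural predicate ``$t_1\in I_1$ and $t_2\in I_2$'' depends on whole hulls.

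The missing idea is the paper's. Split $H_1[u_1,v_1]$ and $H_2[u_2,v_2]$ into $r$ chains each. Define $\gamma(u_1,v_1,u_2,v_2)$ through the six-point Overmars--van Leeuwen test of Lemma~\ref{lem:prune} applied at the four endpoint pairs. Run the basic search lemma over the $r^2$ chain pairs. Then prove, as in Lemma~\ref{lem:80}, that a true predicate localizes at least one tangent point with $O(1)$ further comparisons. Each tangent then costs $O(\log_r n-r^2\Delta\Phi)$, which is $O(1-n^{1/4}\Delta\Phi)$ for $r=n^{1/8}$. You also need $\Gamma$-versions of the other three merge steps, none of which your plan supplies:
\begin{itemize}
\item finding the extreme vertices of each hull;
\item computing the lower envelope of the disjoint representative segments;
\item extracting the visible portions of the hulls.
\end{itemize}

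Your accounting of the primal terms is also wrong. In the combined recursion, \eqref{equ:10} is applied to each of the $O(r^2)$ subproblems $T(n/r,n/r^2)$. Its $m\log n$ term therefore totals $nr\log n$, and its $r^2\log n$ term totals $r^4\log n$. For $r=\Theta(n^{1/3})$ both are $\Theta(n^{4/3}\log n)$, not harmless. Removing them would again require $\Gamma$-versions of Lemma~\ref{lem:30}, which itself merges by common tangents, and of hierarchical point location.

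The paper avoids the primal algorithm here altogether. It applies only \eqref{equ:20} with $r=n^{1/3}$ to obtain \eqref{equ:50}. It then solves each $T(n^{2/3},n^{1/3})$ by building the arrangement of the $n^{1/3}$ lines and point-locating through $O(1)$ selected levels of a hierarchical cutting, where each cell has $O(n^{1/6})$ children, using the basic search lemma at each level.

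Your fallback of rebalancing $r$ cannot replace this. With $r=n^{1/3}$, the top-level merge in \eqref{equ:20} already costs $\Theta(n^{4/3}\log n)$ in the conventional count. Shrinking $r$ to absorb the logarithms is exactly what yields only the $n^{4/3}2^{O(\log^*n)}$ bound from \eqref{equ:30}. The logarithms have to be removed by $\Gamma$-procedures, not by the choice of parameters.
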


In the following, we prove Lemma~\ref{lem:40}. 

We apply recurrence~\eqref{equ:20} by setting $m=n$ and $r=n^{1/3}$, and obtain the following
\begin{align}\label{equ:50}
  T(n,n)= O(n^{4/3}\log n)+O(n^{2/3})\cdot T(n^{2/3},n^{1/3}).
\end{align}
As remarked at the end of Section~\ref{sec:dual}, the $O(n^{4/3}\log n)$ term is due to the procedure of computing the lower hull $H_+(p^*)$ by merging the convex hulls of $\calH_+(p^*)$, for all $p^*\in P^*$. If the procedure could be performed using $O(n^{4/3})$ comparisons, then the term would become $O(n^{4/3})$. Note that $\sum_{p^*\in P^*}|\calH_+(p^*)|=O(n^{4/3})$. As such, to solve $T(n,n)$ by $O(n^{4/3})$ comparisons, there are two challenges: (1) perform the merging procedure on the $O(n^{4/3})$ convex hulls using $O(n^{4/3})$ comparisons; (2) solve each subproblem $T(n^{2/3},n^{1/3})$ using $O(n^{2/3})$ ``amortized'' comparisons, so that all $O(n^{2/3})$ such subproblems in \eqref{equ:50} together cost $O(n^{4/3})$ comparisons.

\paragraph{$\boldsymbol{\Gamma}$-algorithm framework.}

To address these challenges, we resort to a framework known as the {\em $\Gamma$-algorithm} for bounding decision tree complexities, as introduced by Chan and Zheng in their recent work~\cite{ref:ChanHo23}. We provide a brief overview here, with more detailed information available in~\cite[Section 4.1]{ref:ChanHo23}.

In essence, this framework constitutes an algorithm that exclusively counts the number of comparisons, termed {\em $\Gamma$-comparisons} in~\cite{ref:ChanHo23}, to determine if a point belongs to a semialgebraic set of degree $O(1)$ within a constant-dimensional space. Solving our problem of finding the non-empty faces is equivalent to locating the cell $C^*$ that contains a point $p^*$ defined by the input (i.e., the lines of $L$ and the points of $P$ in our problem) within an arrangement $\calA^*$ comprising the boundaries of $O(n)$ semialgebraic sets in an $O(n)$-dimensional space (because our input size is $O(n)$). Constructing this arrangement can be achieved in $O(2^{\poly(n)})$ time without inspecting the input values, thereby obviating the need for any comparisons. Notably, the number of cells in $\calA^*$ is bounded by $n^{O(n)}$.

As the $\Gamma$-algorithm progresses, it maintains a set $\Pi$ consisting of cells from $\calA^*$. Initially, $\Pi$ comprises all cells of $\calA^*$. During the algorithm's execution, $\Pi$ can only shrink, yet it always contains the sought-after cell $C^*$. Upon completion of the algorithm, $C^*$ is located.

We define a potential function $\Phi$ as $\log |\Pi|$. Given that $\calA^*$ contains $n^{O(n)}$ cells, initially $\Phi$ is $O(n\log n)$. For any operation or subroutine of the algorithm, let $\Delta\Phi$ denote the change in $\Phi$ (i.e., the value of $\Phi$ after the operation minus its value before the operation). Since $\Phi$ monotonically decreases throughout the algorithm, we always have $\Delta\Phi \leq 0$. The sum of $-\Delta\Phi$ over the entire algorithm is $O(n\log n)$. Consequently, this allows us to accommodate costly operations or subroutines during the algorithm, as long as they result in a decrease in $\Phi$.

Two algorithmic tools are introduced in \cite{ref:ChanHo23} under the framework: the {\em basic search lemma}~\cite[Lemma 4.1]{ref:ChanHo23} and the {\em search lemma}~\cite[Lemma A.1]{ref:ChanHo23}. Roughly speaking, these lemmas operate as follows: when presented with a set of $r$ predicates, 
where each predicate assesses whether $\gamma(x)$ holds true for the input vector $x$, it is guaranteed that at least one of these predicates is true for all inputs within the active cells. In such cases, the basic search lemma can identify a predicate that holds true by conducting $O(1-r\cdot \Delta\Phi)$ comparisons.
The search lemma is for scenarios involving a binary tree (or a more general directed acyclic graph of $O(1)$ degree) with nodes $v$, each associated with a predicate $\gamma_v$, and where for each internal node $v$, $\gamma_v$ implies $\gamma_u$ for a child $u$ of $v$ across all inputs in the active cells. This lemma allows the identification of a leaf node $v$ for which $\gamma_v$ holds true with $O(1-\Delta\Phi)$ comparisons.

As discussed in \cite{ref:ChanHo23}, intuitively the basic search lemma provides a mild form of nondeterminism, allowing us to ``guess'' which one of $r$ choices is correct, with only $O(1)$ amortized cost instead of $O(r)$. This situation naturally occurs in the context of point location, where we seek to determine which one of the $r$ cells contains a given point. Another noteworthy application of both lemmas, as also discussed in \cite{ref:ChanHo23}, is the task of finding the predecessor of a query number within a sorted list of input numbers. As will be seen later in our algorithm, some subproblems requiring the $\Gamma$-algorithm framework also involves point locations as well as 
finding predecessors within sorted lists, making both the basic search lemma and the search lemma highly applicable. 

\medskip

In the following two subsections, we will address the aforementioned challenges individually. By slightly abusing the notation, let $P$ represent a set of $n$ points, and $L$ denote a set of $n$ lines, for the problem in Recurrence~\eqref{equ:50}.

\subsection{The convex hull merge procedure}
For each dual line $p^*\in P^*$ (without loss of generality, assume it is horizontal), we have a set of convex hulls $\calH_+(p^*)$ such that (1) they are completely above $p^*$; (2) they are pairwise disjoint; (3) $H_+(p^*)$ is the lower hull of them; (4) $\sum_{p^*\in P^*}|\calH_+(p^*)|=O(n^{4/3})$. Again, for each convex hull of $\calH_+(p^*)$, we have a binary search tree that represents it. Our objective is to compute (a binary search tree that represents) the lower hull (which is $H_+(p^*)$) of the vertices of all convex hulls of $\calH_+(p^*)$, for all $p^*\in P^*$.

Define $K_{p^*}=|\calH_+(p^*)|$. Hence, $\sum_{p^*\in P^*}K_{p^*}=O(n^{4/3})$. Wang~\cite{ref:WangCo23} gave an algorithm that can constructs $H_+(p^*)$ in $O(K_{p^*}\log n)$ time, resulting in a total of $O(n^{4/3}\log n)$ time for all $p^*\in P^*$. We will convert the algorithm to a more efficient $\Gamma$-algorithm that uses only $O(n^{4/3})$ comparisons. In what follows, we first review Wang's algorithm. 

\paragraph{Constructing $\boldsymbol{H_+(p^*)}$: A conventional algorithm from \cite{ref:WangCo23}.}
Since $H_+(p^*)$ is the lower hull of all convex hulls of $\calH_+(p^*)$, 
it is only necessary to focus on the lower hull of each convex hull of $\calH_+(p^*)$. For each convex hull of $\calH_+(p^*)$, since we have its binary search tree, we can obtain a binary search tree only 
representing its lower hull in $O(\log n)$ time. This is achieved by initially identifying the leftmost and rightmost vertices of the convex hulls and subsequently executing split/merge operations on these trees. 
The total time for doing this for all convex hulls of $\calH_+(p^*)$ is $O(K_{p^*}\log n)$. 

The next step entails calculating the portions of each lower hull $H$ of $\calH_+(p^*)$ that are vertically visible from the line $p^*$. A point $q\in H$ is {\em vertically visible} from $p^*$ if the vertical line segment connecting $q$ to $p^*$ does not intersect any other lower hull of $\calH_+(p^*)$. Remarkably, these visible portions collectively form the lower envelope of $\calH_+(p^*)$, denoted by $\calL(\calH_+(p^*))$; see Fig.~\ref{fig:lowenvelope}.
Next we compute $\calL(\calH_+(p^*))$ in $O(K_{p^*}\log n)$ time.

\begin{figure}[t]
\begin{minipage}[t]{\textwidth}
\begin{center}
\includegraphics[height=1.3in]{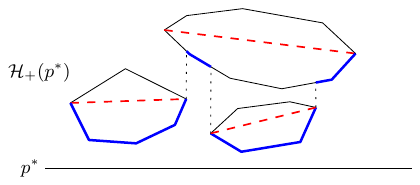}
\caption{\footnotesize Illustrating the lower envelope $\calL(\calH_+(p^*))$ (the thick blue edges) of the convex hulls of $\calH_+(p^*)$. The dashed segments inside convex hulls are their representative segments.}
\label{fig:lowenvelope}
\end{center}
\end{minipage}
\vspace{-0.15in}
\end{figure}


For each convex hull $H$ within $\calH_+(p^*)$, we define its {\em representative segment} as the line segment connecting the leftmost and rightmost endpoints of $H$ (see Fig.~\ref{fig:lowenvelope}). Define $S$ to be the set of the representative segments of all convex hulls of $\calH_+(p^*)$. Because the convex hulls are pairwise disjoint, so are the segments of $S$. Observe that 
the lower envelope $\calL(S)$ of $S$ corresponds to $\calL(\calH_+(p^*))$ in the following manner. For each maximal segment  $\overline{ab}$ of $\calL(S)$, suppose it lies on the representative segment of a convex hull $H$ of $\calH_+(p^*)$. Then the vertical projection of $\overline{ab}$ onto the lower hull of $H$ constitutes a maximal portion of the lower hull of $H$ on $\calL(\calH_+(p^*))$. This specific portion can be obtained in $O(\log n)$ time by splitting the binary search tree representing the lower hull of $H$ at the $x$-coordinates of $a$ and $b$, respectively.
Consequently, once $\calL(S)$ is available, $\calL(\calH_+(p^*))$, where each maximal portion is represented by a binary search tree, can be obtained in additional $O(|\calL(S)|\log n)$ time. Since segments of $S$ are disjoint, we have $|\calL(S)|\leq 2K_{p^*}-1$ and constructing $\calL(S)$ can be accomplished in $O(K_{p^*}\log K_{p^*})$ time by a straightforward plane sweeping algorithm (see Section~\ref{sec:second} for details). Note that $K_{p^*}\leq n$ since vertices of each convex hull of $\calH_+(p^*)$ are all from $L^*$ (whose size is $n$).
As such, $\calL(\calH_+(p^*))$ can be computed in $O(K_{p^*}\log n)$ time.

With $\calL(\calH_+(p^*))$, we proceed to compute the lower hull $H_+(p^*)$ in additional $O(K_{p^*}\log n)$ time, as follows. As previously discussed, $\calL(\calH_+(p^*))$ comprises at most $2K_{p^*}-1$ pieces, organized from left to right. Each piece is a portion of a lower hull of $\calH_+(p^*)$ and is represented by a binary search tree.

To begin, we merge the first two pieces by computing their lower common tangent, a task achievable in $O(\log n)$ time~\cite{ref:OvermarsMa81}, since these two pieces are separated by a vertical line. Following this merge operation, we obtain a binary search tree that represents the lower hull of the first two pieces of $\calL(\calH_+(p^*))$. Subsequently, we repeat this merging process, proceeding to merge this lower hull with the third piece of $\calL(\calH_+(p^*))$ in a similar manner. This process continues until all pieces of $\calL(\calH_+(p^*))$ have been merged, culminating in a binary search tree that represents $H_+(p^*)$.
The runtime for this procedure is bounded by $O(K_{p^*}\log n)$, as each merge operation consumes $O(\log n)$ time, and $\calL(\calH_+(p^*))$ contains at most $2K_{p^*}-1$ pieces.

As such, the lower hull $H_+(p^*)$ can be computed in $O(K_{p^*}\log n)$ time.

Applying the above algorithm to all dual lines $p^*$ of $P^*$ will compute the lower hulls $H_+(p^*)$ for all $p^*\in P^*$. It should be noted that after the algorithm is applied to one line $p^*\in P^*$, binary search trees of convex hulls of $\calH_+(p^*)$ may have been destroyed due to the split and merge operations during the algorithm. The destroyed convex hulls may be used later for other lines of $P^*$. To address the issue, we can use persistent binary search trees with path-copying~\cite{ref:DriscollMa89,ref:SarnakPl86} to represent convex hulls so that standard operations on the trees (e.g., merge, split) can be performed in $O(\log n)$ time each and after each operation the original trees are still kept. As such, whenever the algorithm is invoked for a new line of $P^*$, we can always access the original trees representing the convex hulls, and thus the runtime of the algorithm is not affected. 

\paragraph{Constructing $\boldsymbol{H_+(p^*)}$: a new and faster $\boldsymbol{\Gamma}$-algorithm.}
We now convert the above algorithm to a faster $\Gamma$-algorithm. In the above algorithm, the procedures that take $O(K_{p^*}\log n)$ time are the following: (1) Computing the trees representing the lower hulls of $\calH_+(p^*)$; (2) computing the lower envelope $\calL(S)$ of $S$; (3) computing the lower envelope $\calL(\calH_+(p^*))$ of $\calH_+(p^*)$ by using $\calL(S)$; (4) computing $H_+(p^*)$ by merging the pieces of $\calL(\calH_+(p^*))$. Each of these procedures leads to an overall $O(n^{4/3}\log n)$ time for all $p^*\in P^*$ since $\sum_{p^*\in P^*}|\calH_+(p^*)|=O(n^{4/3})$. 

For each of these procedures, we will design a corresponding ``$\Gamma$-procedure'' that uses only $O(n^{4/3})$ comparisons for all $p^*\in P^*$ (thus the amortized cost for each $p^*$ is $O(K_{p^*})$). In particular, the fourth procedure poses the most challenge and its solution is also the most interesting. 

\subsubsection{The first procedure}
For the first procedure, let $H$ be a convex hull of $\calH_+(p^*)$. Given a binary search tree $T_H$ representing $H$, the objective is to obtain a binary search tree from $T_H$ to represent the lower hull of $H$. The first step is to find the leftmost and rightmost vertices of $H$. A straightforward approach, as in Wang's approach, is to do binary search on $T_H$, which takes $O(\log n)$ time. This results in a total of $O(n^{4/3}\log n)$ time for all convex hulls of $\calH_+(p^*)$ and for all $p^*\in P^*$. 
In the following, we propose a new approach that solve the problem in $O(n^{4/3})$ time. In fact, this is a conventional algorithm (not a particular $\Gamma$-algorithm). 

Recall that there are $O(r^2)$ convex hulls in total because each convex hull is for the points of $L^*(\sigma)$ for each cell $\sigma$ in our hierarchical $(1/r)$-cutting $\Xi$ for the dual lines of $P^*$ and $\Xi$ has $O(r^2)$ cells, with $r=n^{1/3}$. Also, recall (from Section~\ref{sec:dual}) that the total number of points of $L^*(\sigma)$ in all cells $\sigma\in \Xi$ is bounded by $O(n\log r)$ since each point of $L^*$ is in at most $O(\log r)$ cells. Therefore, the total number of vertices of all convex hulls involved in our algorithm is $O(n\log n)$.

\begin{figure}[t]
\begin{minipage}[t]{\textwidth}
\begin{center}
\includegraphics[height=1.0in]{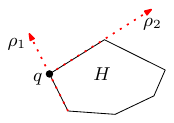}
\caption{\footnotesize Illustrating the two rays $\rho_1$ and $\rho_2$.}
\label{fig:rays}
\end{center}
\end{minipage}
\vspace{-0.15in}
\end{figure}

Let $q$ be the leftmost vertex of a convex
hull $H\in \calH_+(p^*)$ for a dual line $p^*\in P^*$ (assuming that $p^*$ is horizontal). Let
$\rho_1$ be the ray from the counterclockwise neighboring vertex of $q$ to $q$, i.e., $\rho_2$ contains
the counterclockwise edge of $H$ incident to $q$ (see Fig.~\ref{fig:rays}). Let $\rho_2$ be the ray from $q$
to the clockwise neighboring vertex of $q$ on $H$, i.e., $\rho_2$ contains
the clockwise edge of $H$ incident to $q$. Observe that for another dual line $p^*_1\in P^*$, $q$ is still the leftmost vertex of $H$ with respect to $p^*_1$ as
long as the direction perpendicular to $p_1^*$ is within the angle from $\rho_1$
clockwise to $\rho_2$.
Based on this observation, before we process any dual line $p^*$ of $P^*$ (i.e., computing their $H_+(p^*)$), we sort the perpendicular
directions of all lines of $P^*$ along with the directions of all edges of all
convex hulls for all cells of $\Xi$, which takes $O(n\log^2 n)$ time as the total size of all these convex hulls is $O(n\log n)$ as discussed above. 
Let $A$ be the sorted list. We process lines of $P^*$ following their order in $A$.
Let $p^*$ and $p^*_1$ be two consecutive lines of $P^*$ in $A$. After we process $p^*$, the directions between $p^*$ and $p^*_1$ in $A$ correspond to those
convex hulls whose leftmost vertices need to get updated, and we then
update them before we process $p^*_1$. The
total time for updating the convex hulls as above is proportional to the total
size of all convex hulls, which is $O(n\log n)$.

In this way, whenever we process a line $p^*$, for each convex hull $H\in \calH_+(p^*)$, we have its leftmost vertex available, and more specifically, we have a pointer pointing to the leftmost vertex in the binary search tree of $H$. 

In a similar way, we can also find the rightmost vertices of all convex hulls of $\calH_+(p^*)$. As such, finding the leftmost and rightmost vertices of every convex hull $H\in \calH_+(p^*)$ for all $p^*\in P^*$ can be done in $O(n^{4/3})$ time by a conventional algorithm. 

After having the leftmost and rightmost vertices of each convex hull $H$, we can split/merge the tree $T_H$ using the two vertices. This takes $O(\log n)$ time in a conventional algorithm. However, these operations do not involve comparisons and thus do not introduce any cost in a $\Gamma$-algorithm. 

In summary, we have a $\Gamma$-procedure that can solve the first procedure using $O(n^{4/3})$ comparisons for all $p^*\in P^*$. 

\subsubsection{The second procedure}
\label{sec:second}
The second procedure is to compute the lower envelope $\calL(S)$ of $S$. First of all, since the leftmost and rightmost points of each convex hull $H$ of $\calH_+(p^*)$ have already been computed in the first procedure, we have $S$ available as well. Let $t=|S|$. Recall that $t\leq 2K_{p^*}-1$.
The lower envelope $\calL(S)$ of $S$ can be easily computed in $O(t\log t)$ time by a standard plane sweeping algorithm, which will result in a total of $O(n^{4/3}\log n)$ time for all $p^*\in P^*$. 
Our goal is to design a $\Gamma$-algorithm that uses only $O(n^{4/3})$ comparisons. 

We first briefly discuss the plane sweeping algorithm. Sort endpoints of all segments of $S$ from left to right. Sweep a vertical line $\ell$ from left to right in the plane. We maintain a subset $S(\ell)\subseteq S$ of segments that intersect $\ell$ by a binary search tree $T_{\ell}$ in which segments are ordered by the $y$-coordinates of their intersections with $\ell$. When $\ell$ hits the left (resp., right) endpoint of a segment $s\in S$, we insert $s$ into $T_{\ell}$ (resp., delete $s$ from $T_{\ell}$). In either event, we update the lowest segment of $T_{\ell}$, which is on $\calL(S)$; the lowest segment is stored at the leftmost leaf of $T_{\ell}$. Each event can be processed in $O(\log t)$ time and thus the total time of the algorithm is $O(t\log t)$. 

We next convert the algorithm to a $\Gamma$-procedure. First of all, the sorting can be done using $O(t-\Delta\Phi)$ comparisons (e.g., using the search lemma, as already discussed in \cite{ref:ChanHo23}; more specifically, we can find a predecessor of a value among a list of $t'$ sorted values using $O(1-\Delta\Phi)$ comparisons, by applying the search lemma to a binary search tree of $t'$ leaves).

In the line sweeping procedure, we maintain a tree $T_{\ell}$ that involves insertion/deletion operations. Each insertion/deletion essentially is to find the predecessor of a list of values in $T_{\ell}$. By applying the search lemma, as discussed above, each operation can be done using $O(1-\Delta\Phi)$ comparisons (note that once the predecessor is found, adjusting the tree does not cost any comparisons). As such, the sweeping procedure can be performed using $O(t-\Delta\Phi)$ comparisons. 

The above shows that the second procedure can be accomplished using $O(t-\Delta\Phi)$ comparisons for each $p^*\in P^*$. Since $t\leq 2K_{p^*}-1$, $\sum_{p^*\in P^*}K_{p^*}=O(n^{4/3})$, and the sum of $-\Delta\Phi$ in the entire algorithm is $O(n\log n)$,  we obtain that the total number of comparisons needed in the second procedure for all $p^*\in P^*$ is bounded by $O(n^{4/3})$. 

\subsubsection{The third procedure}
The third procedure is to compute the lower envelope $\calL(\calH_+(p^*))$. As discussed above, for each segment $\overline{ab}$ of $\calL(S)$, 
we need to compute the portion of $H$ between $a$ and $b$, where $H$ is the lower hull of the convex hull whose representing segment contains $\overline{ab}$. To this end, assuming that $a$ is to the left of $b$, it suffices to find the successor vertex of $H$ after $a$ and the predecessor vertex of $H$ before $b$. This can be done using $O(1-\Delta\Phi)$ comparisons, by applying the search lemma on the binary search tree $T_H$ of $H$. After finding the above two vertices, we need to obtain a binary search tree for the portion of $H$ between them; this can be done by splitting $T_H$ at these vertices, which is a procedure that does not cost any comparisons. 

Since $S$ has at most $2K_{p^*}-1$ segments, $\sum_{p^*\in P^*}K_{p^*}=O(n^{4/3})$, and the sum of $-\Delta\Phi$ in the entire algorithm is $O(n\log n)$,  the total number of comparisons needed in the third procedure for all $p^*\in P^*$ is $O(n^{4/3})$. 

\subsubsection{The fourth procedure}
The fourth procedure is to compute $H_+(p^*)$. The idea is to merge pieces of $\calL(\calH_+(p^*))$ one by one from left to right, and each merge is done by first computing the lower common tangent. Note that each piece is a portion of the lower hull of a convex hull of $\calH_+(p^*)$ and is represented by a binary search tree. 
Specifically, suppose we have the lower hull $H_i$ for the first $i$ pieces, i.e., we have a binary search tree $T_i$ for $H_i$. The next step is to merge $H_i$ with the $(i+1)$-th piece $H$ by first computing the lower common tangent of $H_i$ and $H$. Let $T_H$ be the binary search tree for $H$. 
With the lower common tangent, we perform split and merge operations on $T_i$ and $T_H$ for $H$ to obtain a new tree $T_{i+1}$ for the lower hull of the first $i+1$ pieces. Finding the lower common tangent can be done in $O(\log n)$ time~\cite{ref:OvermarsMa81} and constructing the tree $T_{i+1}$ also takes $O(\log n)$ time. 

In the following, we focus on describing a $\Gamma$-procedure that can compute the lower common tangent of $H_i$ and $H$ using $O(1-n^{1/4}\cdot \Delta\Phi)$ comparisons. After the lower common tangent is computed, constructing $T_{i+1}$ does not need any comparisons because split and merge operations on $T_i$ and $T_H$ do not involve any comparisons. 

\paragraph{Notation and algorithm overview.}

For notational convenience, let $H_1=H_i$, $H_2=H$, $T_1=T_i$, and $T_2=T_H$. 

We will use the basic search lemma of the $\Gamma$-algorithm framework of Chan and Zheng~\cite{ref:ChanHo23} discussed above. We will also rely on the technique from Overmars and van Leeuwen~\cite{ref:OvermarsMa81} for computing common tangents of two convex hulls. 

Let the line segment $\overline{t_1t_2}$ be the lower common tangent between $H_1$ and $H_2$, with $t_1\in H_1$ and $t_2\in H_2$. We call $t_1$ and $t_2$ the {\em tangent points}. For ease of exposition, we assume that $\overline{t_1t_2}$ is not collinear with any edge of $H_1\cup H_2$. As such, the tangent points $t_1$ and $t_2$ are unique. 

For each $i=1,2$, since we are searching a vertex $t_i$ from $H_i$, from now on, we view $H_i$ as a sequence of vertices ordered from left to right. We use a {\em chain} to refer to a contiguous subsequence of $H_i$. For two vertices $u$ and $v$ of $H_i$ with $u$ left of $v$, we use $H_i[u,v]$ to denote the chain of $H_i$ from $u$ to $v$ including both $u$ and $v$, while $H_i(u,v)$ is defined to be $H_i[u,v]\setminus\{u,v\}$. Similarly, $H_i[u,v)$ refers to the chain of $H_i$ from $u$ to $v$ including $u$ but excluding $v$; $H_i(u,v]$ is defined likewise. In addition, for any vertex $u$ of $H_i$, we use $u-1$ (resp., $u+1$) to refer to the left (resp., right) neighboring vertex of $u$ in $H_i$; for convenience, if $u$ is the leftmost (resp., rightmost) vertex of $H_i$, then $u-1$ (resp., $u+1$) refers to $u$. 
For example, $H_i(u,v)=H_i[u+1,v-1]$.

The following lemma is based on the technique of Overmars and van Leeuwen~\cite{ref:OvermarsMa81}, which will be used in our algorithm. 

\begin{lemma}\label{lem:prune}{\em (Overmars and van Leeuwen~\cite{ref:OvermarsMa81})}
Given two points $p_1\in H_1$ and $p_2\in H_2$, $p_1$ (resp., $p_2$) partitions $H_1$ (resp., $H_2$) into two chains (we assume that neither chain contains $p_i$, $i=1,2$). Then, among the four chains of $H_1$ and $H_2$, we can determine at least one chain that does not contain a tangent point based on the following six points: $p_1$ (resp., $p_2$) and its two neighboring vertices of $H_1$ (resp., $H_2$); such a chain is called {\em OvL-prunable}\footnote{OvL is the last name initials of Overmars and van Leeuwen.}. 
\end{lemma}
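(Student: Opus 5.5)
The plan is to prove the lemma by a case analysis on the line $\lambda$ through $p_1$ and $p_2$. We use only the fact that $H_1$ and $H_2$ are lower hulls (convex chains) separated by a vertical line $x=\mu$, with $H_1$ to the left. Recall that $\overline{t_1t_2}$ is characterized as the unique segment with $t_i\in H_i$ such that all of $H_1\cup H_2$ lies on or above its supporting line. Locally, this means both neighbors $t_i-1$ and $t_i+1$ lie above that line.

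For each $i$ we classify $p_i$ by the position of its two neighbors relative to $\lambda$. Since $H_i$ is convex, at most one neighbor can lie strictly below $\lambda$. So $p_i$ is either \emph{supporting} (both neighbors on or above $\lambda$), \emph{left-dipping} ($p_i-1$ below $\lambda$), or \emph{right-dipping} ($p_i+1$ below $\lambda$). These six points determine the type of both $p_1$ and $p_2$.

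The easy cases are handled by a rotation/monotonicity argument: for a fixed point $q$ outside a convex chain, the tangent from $q$ moves monotonically along the chain.
\begin{itemize}
\item If $p_1$ is left-dipping, then $H_1$ has points below $\lambda$ to the left of $p_1$. Rotating $\lambda$ around $p_2$ shows that the tangent point from $p_2$ to $H_1$ lies strictly left of $p_1$. Moreover, the true $t_1$ lies on or left of the tangent point from any point of $H_2$ that lies on or above the final tangent line. Hence $t_1$ is strictly left of $p_1$, and the chain of $H_1$ to the right of $p_1$ is OvL-prunable.
\item Symmetrically, if $p_2$ is right-dipping, the chain of $H_2$ left of $p_2$ is prunable.
\item If $p_1$ is supporting, then $\lambda$ is a lower supporting line of $H_1$ at $p_1$. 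Any point of $H_1$ left of $p_1$ lies above $\lambda$, and a tangent line through such a point would have to pass below $p_1$. A short argument then shows $t_1$ is not strictly left of $p_1$, so the chain of $H_1$ left of $p_1$ is prunable.
\item Symmetrically, if $p_2$ is supporting, the chain of $H_2$ right of $p_2$ is prunable.
\end{itemize}
I would verify each of these carefully using the convexity of the chains and the fact that they lie on opposite sides of $x=\mu$.

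The remaining and main obstacle is the case where $p_1$ is right-dipping and $p_2$ is left-dipping. Here both hulls bend below $\lambda$ toward the middle, and no single chain is obviously excluded from local information at one point. Following Overmars and van Leeuwen, I would use both local edges:
\begin{itemize}
\item extend the edge $e_1=\overline{p_1(p_1+1)}$ to a line $\ell_1$, and the edge $e_2=\overline{(p_2-1)p_2}$ to a line $\ell_2$;
\item compute their intersection $z$, and compare its $x$-coordinate with $\mu$.
\end{itemize}
If $z$ lies left of (or on) $x=\mu$, I claim the left chain of $H_1$ at $p_1$ (the part before $p_1$) can be pruned. The reason is that the part of $H_1$ to the left of $p_1$ lies above $\ell_1$. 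Since $H_2$ lies right of $\mu$ and, near $p_2$, bends below $\lambda$ along $\ell_2$, any line through a point of $H_1$ left of $p_1$ that supports $H_2$ from below must pass above $p_1+1$ or through the region below $e_1$. This contradicts the requirement that the tangent lie below all of $H_1$. If $z$ lies right of $\mu$, the symmetric argument prunes the right chain of $H_2$ beyond $p_2$.

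I expect the delicate point to be justifying this intersection-point argument rigorously. My first attempt would be to show, by convexity, that $H_1$ lies above $\ell_1$ and $H_2$ lies above $\ell_2$, so that the lower envelope of $\ell_1$ and $\ell_2$ lies below everything. The position of $z$ relative to $\mu$ then tells which of $e_1$ and $e_2$ must be crossed or strictly dominated by the common tangent. Degenerate cases (points on $\lambda$, or $p_i$ an endpoint so that $p_i\pm1=p_i$) are absorbed by the general position assumption and the convention $u-1=u$ at endpoints.
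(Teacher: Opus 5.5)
There is a genuine error in your third and fourth bullets: they prune the wrong side. If $p_1$ is supporting, then $\lambda$ passes through $p_2$, lies below all of $H_1$, and touches it at $p_1$. So $\lambda$ is exactly the lower tangent from $p_2$ to $H_1$. By the monotonicity fact from your own first bullet ($t_1$ lies on or left of the tangent point from any point of $H_2$), $t_1$ lies on or \emph{left} of $p_1$. Hence it is the chain of $H_1$ to the \emph{right} of $p_1$ that is prunable, and the left chain can contain $t_1$.

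Concretely, take $H_1$ with vertices $(-3,9),(-2,4),(-1,1)$, $H_2$ with vertices $(2,-100),(3,-16),(4,70)$, $p_1=(-2,4)$ and $p_2=(3,-16)$. Then $\lambda$ is $y=-4x-4$ and both neighbors of $p_1$ lie above it. Yet the lower common tangent passes through $(-3,9)$ and $(2,-100)$, so $t_1=(-3,9)$ lies in the chain your rule discards. Your ``short argument'' cannot be repaired: a tangent through a point left of $p_1$ that passes below $p_1$ is no contradiction when $H_2$ dips far below $\lambda$, as $p_2-1$ does here.

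Symmetrically, if $p_2$ is supporting, you must discard the chain of $H_2$ to the \emph{left} of $p_2$. With this correction your first and third bullets merge into one rule: if $p_1+1$ is not below $\lambda$, discard $H_1$ right of $p_1$. Likewise, if $p_2-1$ is not below $\lambda$, discard $H_2$ left of $p_2$. Only the case ``$p_1$ right-dipping, $p_2$ left-dipping'' remains.

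In that remaining case your conclusions are correct, but the justification you sketch does not decide anything: that the lower envelope of $\ell_1,\ell_2$ lies below everything is true, but it does not say which chain to prune. What works is the following.
\begin{enumerate}
\item The slope of $\ell_1$ is less than the slope of $\lambda$, which is less than the slope of $\ell_2$. So if $x(z)\le\mu$, then $\ell_1\le\ell_2$ on $x\ge\mu$.
\item Hence $H_2$, which lies above $\ell_2$ and right of $\mu$, also lies above $\ell_1$. Thus $\ell_1$ supports $H_1\cup H_2$ from below along $e_1$.
\item So $e_1$ is an edge of the lower hull of $H_1\cup H_2$. That hull is $H_1[a_1,t_1]$, then the bridge, then $H_2[t_2,b_2]$, which forces $t_1$ to be at or right of $p_1+1$.
\end{enumerate}
The case $x(z)>\mu$ is symmetric.

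Your use of $\mu$, which is not among the six points, is not a defect; it is unavoidable, as in Overmars--van Leeuwen. Take the six points $(-1,2),(0,0),(1,-1)$ around $p_1$ and $(9,-1),(10,0),(11,2)$ around $p_2$.
\begin{itemize}
\item Adding $(8,-7.9)$ to $H_1$ gives $t_1=(8,-7.9)$ and $t_2=(11,2)$.
\item Adding instead $(2,-7.9)$ to $H_2$ gives $t_1=(-1,2)$ and $t_2=(2,-7.9)$.
\end{itemize}
The two configurations have disjoint sets of prunable chains, so the six points alone cannot determine one.
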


Lemma~\ref{lem:prune} provides a pruning criterion for the binary search algorithm of Overmars and van Leeuwen~\cite{ref:OvermarsMa81} that can compute $\overline{t_1t_2}$ in $O(\log n)$ time. We will use the lemma in a different way. For any subchain of an OvL-prunable chain determined by the lemma, we also say that the subchain is {\em OvL-prunable} (a subchain is a contiguous subsequence of vertices of the chain).

For each $i=1,2$, our algorithm maintains two vertices $u_i$ and $v_i$ of $H_i$ such that $t_i\in H_i(u_i,v_i)$. After $O(1)$ iterations, either $H_i(u_1,v_1)$ or $H_i(u_2,v_2)$ contains a single point. 
The single point is $t_1$ in the former case and $t_2$ in the latter one. 
In either case, one tangent point is determined. After that, the other tangent point can be computed by a similar (and simpler) procedure. For ease of exposition, we assume $t_i$ is not an endpoint of $H_i$, for each $i=1,2$.

\paragraph{Computing a tangent point.}
Initially, for each $i=1,2$, we set $u_i$ and $v_i$ to be the two endpoints of $H_i$, respectively. This guarantees that the algorithm invariant holds, i.e., $t_i\in H_i(u_i,v_i)$. 
We consider a general step of the algorithm. To apply the basic search lemma, we define a predicate $\gamma(u_1,v_1,u_2,v_2)$ with respect to $u_1,v_1,u_2,v_2$, as follows. 

\begin{definition}
Define $\gamma(u_1,v_1,u_2,v_2)$ to be true if neither $H_1(u_1,v_1)$ nor $H_2(u_2,v_2)$ is OvL-prunable when Lemma~\ref{lem:prune} is applied to $(p_1,p_2)$ for any $p_1\in \{u_1,v_1\}$ and $p_2\in \{u_2,v_2\}$.    
\end{definition}

By Lemma~\ref{lem:prune}, whether $\gamma(u_1,v_1,u_2,v_2)$ is true can be determined by at most $O(1)$ points of $L^*$, i.e., at most six points are needed for each pair $(p_1,p_2)$ with $p_1\in \{u_1,v_1\}$ and $p_2\in \{u_2,v_2\}$, 
and thus the predicate is of $O(1)$ degree (where the variables are the coordinates of the $O(1)$ points).


Each iteration of the algorithm proceeds as follows. 
For each $i=1,2$, partition $H_i[u_i,v_i]$ into $r$ chains of roughly equal lengths: $H_i[u_{ij},v_{ij}]$, $1 \leq j\leq r$, for a parameter $r$ to be determined later. Define $C_i=\{H_i[u_{ij}-1,v_{ij}+1]\ |\ 1\leq j\leq r\}$. 
As such, the length of each chain of $C_i$ is roughly equal to the length of $H_i[u_i,v_i]$ divided by $r$. 
Define $\calC=C_1\times C_2$, i.e., $\calC$ consists of $O(r^2)$ pairs of chains such that in each pair, the first chain is from $C_1$ and the second one is from $C_2$. We have the following observation. 

\begin{observation}\label{obser:70}
$\calC$ must contain a pair of chains $H_1[u_{1j_1},v_{1j_1}]$ and $H_2[u_{2j_2},v_{2j_2}]$ such that the predicate $\gamma(u_{1j_1},v_{1j_1},u_{2j_2},v_{2j_2})$ holds true. 
\end{observation}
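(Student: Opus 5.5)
The natural candidate is the pair of chains containing the two tangent points. By the invariant, $t_1\in H_1(u_1,v_1)$ and $t_2\in H_2(u_2,v_2)$. The chains $H_i[u_{ij},v_{ij}]$, $1\le j\le r$, partition $H_i[u_i,v_i]$. So for each $i=1,2$ there is an index $j_i$ with $t_i\in H_i[u_{ij_i},v_{ij_i}]$. We take this pair $(j_1,j_2)$ and show that $\gamma(u_{1j_1},v_{1j_1},u_{2j_2},v_{2j_2})$ holds. Here the arguments are read as the endpoints $u_{ij_i}-1$ and $v_{ij_i}+1$ of the widened chains in $C_i$, so that $t_i$ lies in the open chain $H_i(u_{ij_i}-1,v_{ij_i}+1)$. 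This one-vertex widening is exactly why $C_i$ is defined with $u_{ij}-1$ and $v_{ij}+1$. It guarantees that $t_i$ is never an endpoint of the chain, even when $t_i=u_{ij_i}$ or $t_i=v_{ij_i}$.

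Write $a_i=u_{ij_i}-1$ and $b_i=v_{ij_i}+1$. Fix any $p_1\in\{a_1,b_1\}$ and $p_2\in\{a_2,b_2\}$, and apply Lemma~\ref{lem:prune} to $(p_1,p_2)$. The lemma returns a chain of $H_1\setminus\{p_1\}$ or $H_2\setminus\{p_2\}$ that contains no tangent point. We must show that $H_1(a_1,b_1)$ and $H_2(a_2,b_2)$ are not subchains of the returned chain, since by the definition (extended to subchains) only that would make them OvL-prunable.

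The argument is short. Since $p_1$ is an endpoint of $H_1[a_1,b_1]$, the open chain $H_1(a_1,b_1)$ lies entirely on one side of $p_1$. Hence it is a subchain of exactly one of the two chains into which $p_1$ splits $H_1$. That chain contains $t_1$, because $t_1\in H_1(a_1,b_1)$. A chain containing a tangent point cannot be the one returned by Lemma~\ref{lem:prune}, so $H_1(a_1,b_1)$ is not OvL-prunable for this pair. The same reasoning applies to $H_2(a_2,b_2)$ via $t_2$. Doing this for all four choices of $(p_1,p_2)$ shows that the predicate is true.

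The only delicate point is bookkeeping. One must confirm that \emph{OvL-prunable subchain} means exactly a subchain of a chain certified by the lemma to avoid tangent points, and that the endpoint convention ($u-1=u$ at the leftmost vertex) causes no trouble. If $a_i$ coincides with $u_{ij_i}$ because the chain starts at the hull's left end, then $t_i=u_{ij_i}$ could fall on an endpoint. That case is excluded by the standing assumptions that $t_i$ is not an endpoint of $H_i$ and that $t_i\in H_i(u_i,v_i)$.
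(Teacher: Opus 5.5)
Your proposal is correct and follows the paper's own argument: take the pair of widened chains in $C_1$ and $C_2$ that contain $t_1$ and $t_2$. The one-vertex widening puts each $t_i$ in the open chain, and a chain containing a tangent point cannot be OvL-prunable. Your version also spells out the subchain bookkeeping and the hull-endpoint case, which the paper's brief proof leaves implicit.
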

\begin{proof}
For each $i=1,2$, $t_i$ must be in one of the partitions $H_i[u_{ij},v_{ij}]$, $1 \leq j\leq r$.
By the definition of $C_i$, $t_i$ must be in the interior of a chain $H_i[u_{ij_1},v_{ij_1}]$ of $C_i$. Therefore, $H_i(u_{ij_1},v_{ij_1})$ cannot be an OvL-prunable chain. Hence, 
$\gamma(u_{1j_1},v_{1j_1},u_{2j_2},v_{2j_2})$ must hold true. 
\end{proof}

In light of the above observation, we apply the basic search lemma on the $O(r^2)$ cells of $\calC$ to find a pair of chains $H_1[u_{1j_1},v_{1j_1}]$ and $H_2[u_{2j_2},v_{2j_2}]$ such that the predicate $\gamma(u_{1j_1},v_{1j_1},u_{2j_2},v_{2j_2})$ is true.  
By the basic search lemma, this can be accomplished using $O(1-r^2\cdot \Delta\Phi)$ comparisons. The subsequent lemma, which proves a key property, ensures our ability to recursively employ the algorithm.


\begin{lemma}\label{lem:80}
If $\gamma(u_{1j_1},v_{1j_1},u_{2j_2},v_{2j_2})$ is true, then either $t_1\in H_1[u_{1j_1},v_{1j_1}]$ or $t_2\in H_2[u_{2j_2},v_{2j_2}]$, and which case happens can be determined using $O(1)$ comparisons. 
\end{lemma}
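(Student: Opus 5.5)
Write $a_i=u_{ij_i}$ and $b_i=v_{ij_i}$ for the chosen endpoints (recall these are the endpoints of the chain in $C_i$, i.e.\ the original partition piece extended by one vertex on each side). The plan is to argue by contradiction using the structure of Overmars and van Leeuwen's case analysis. Assume neither tangent point lies in its chosen chain: $t_1\notin H_1[a_1,b_1]$ and $t_2\notin H_2[a_2,b_2]$. Then for each $i$, $t_i$ lies strictly on one side of $H_i[a_i,b_i]$, either left of $a_i$ or right of $b_i$. Choose $p_i\in\{a_i,b_i\}$ as the endpoint on the side facing $t_i$, so that $H_i(a_i,b_i)$ lies entirely in the chain of $H_i\setminus\{p_i\}$ that does \emph{not} contain $t_i$.

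The key step is to revisit Lemma~\ref{lem:prune} and recall what it decides. For a pair $(p_1,p_2)$, the six points classify $p_1$ as being on the tangent, or the tangent point lying to its left or right on $H_1$, possibly only conditionally. The same holds for $p_2$. In each case of the Overmars--van Leeuwen analysis, the pruned chain is a chain on one side of some $p_i$ that is certified not to contain $t_i$.

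The main obstacle is the following issue. Lemma~\ref{lem:prune} only promises that \emph{at least one} of the four chains is prunable. It does not guarantee that the chain containing $H_i(a_i,b_i)$ is among the pruned ones for our particular choice of $(p_1,p_2)$.

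I would handle this by going through the standard case table (the configurations of the supporting lines at $p_1$ and $p_2$ relative to the segment $\overline{p_1p_2}$):
\begin{itemize}
\item In the ``clean'' cases, the segment $\overline{p_1p_2}$ lies locally below or above both hulls at $p_1$ and $p_2$. There the lemma prunes exactly the side of each $p_i$ that does not contain $t_i$, and this side contains $H_i(a_i,b_i)$. So $H_i(a_i,b_i)$ is OvL-prunable, contradicting that $\gamma$ is true.
\item In the ambiguous case (the concave--concave case of Overmars and van Leeuwen), only one side of one hull is pruned. Here I would exploit the freedom to use all four pairs $(p_1,p_2)\in\{a_1,b_1\}\times\{a_2,b_2\}$. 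Switching $p_1$ to the other endpoint of $H_1[a_1,b_1]$ changes which chain is pruned. I expect that some pair in the product then prunes the chain containing $H_1(a_1,b_1)$ or $H_2(a_2,b_2)$, again contradicting $\gamma$.
\end{itemize}
This case check is what I expect to be delicate. I would first try to argue it geometrically via the separating vertical line between $H_1$ and $H_2$ (they are separated since the pieces come left to right), which excludes several OvL cases.

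For the algorithmic part, I would run Lemma~\ref{lem:prune} on the four pairs $(p_1,p_2)$. This evaluates $O(1)$ orientation tests on at most $O(1)$ points, hence uses $O(1)$ comparisons. From the outcomes of these tests one reads off, for at least one $i$, on which side of the chain $t_i$ lies or whether $t_i\in H_i[a_i,b_i]$. By the contradiction argument above, for at least one $i$ the outcome is that $t_i\in H_i[a_i,b_i]$, and this determines which of the two cases of Lemma~\ref{lem:80} occurs.
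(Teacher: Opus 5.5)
\textbf{Assessment.} The contradiction set-up is sound. However, the step you flag as delicate is left at ``I expect'', and the final algorithmic step does not follow from what you prove.

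\textbf{The existence part needs no case table.} The obstacle you describe is not real. Under your hypothesis, let $p_i$ be the endpoint facing $t_i$. The chain of $H_i\setminus\{p_i\}$ on the side of $t_i$ contains $t_i$. So it can never be a chain that Lemma~\ref{lem:prune} declares OvL-prunable, since those contain no tangent point. The lemma declares at least one of the four chains prunable, so that chain must face away from $t_1$ or from $t_2$. Each such chain contains the corresponding $H_i(u_{ij_i},v_{ij_i})$, which is then OvL-prunable as a subchain, contradicting $\gamma$. This handles every configuration, the concave--concave one included, at the single pair $(p_1,p_2)$. You therefore do not need switching between pairs or any claims about the internal OvL cases. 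Those claims are also not what Lemma~\ref{lem:prune} provides: in several of Overmars and van Leeuwen's cases, for instance one point of concave type and the other of reflex type, only one hull gets pruned.

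\textbf{The real gap is the ``determined using $O(1)$ comparisons'' half.} An argument about where $t_i$ actually lies shows that some $i$ satisfies $t_i\in H_i[u_{ij_i},v_{ij_i}]$. It does not let the algorithm identify that $i$. The outputs of Lemma~\ref{lem:prune} only ever certify that some chain avoids a tangent point. You must therefore show that, for some $i$, the outputs prune \emph{both} exterior chains of $H_i$: the part strictly left of $u_{ij_i}$ and the part strictly right of $v_{ij_i}$. Your final step (``by the contradiction argument above, for at least one $i$ the outcome is\ldots'') conflates truth with certification.

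\textbf{How the paper closes this.} It proves the certified statement. At each pair, $\gamma$ forbids pruning the two chains that contain $H_1(u_{1j_1},v_{1j_1})$ and $H_2(u_{2j_2},v_{2j_2})$, so some exterior chain must be pruned. Chaining the pairs $(u_{1j_1},u_{2j_2})$, $(v_{1j_1},u_{2j_2})$, $(v_{1j_1},v_{2j_2})$ then yields an $i$ with both exterior chains pruned.

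\textbf{The same fix in your style.} Suppose that for each $i$ some exterior side of $H_i$ were pruned by none of the four pairs. At the pair of endpoints bordering those two sides, none of the four chains would be prunable: the exterior ones by assumption, the interior ones by $\gamma$. This contradicts Lemma~\ref{lem:prune}. The argument gives the existence claim and the $O(1)$-comparison test at once: run the lemma on the four pairs and pick the $i$ whose two exterior chains were pruned.
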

\begin{proof}
For each $i=1,2$, let $a_i$ and $b_i$ be the leftmost and rightmost vertices of $H_i$, respectively. The two vertices $u_{ij_i}$ and $v_{ij_i}$ partition $H_i$ into three chains $H_i[a_i,u_{ij_i})$, $H_i[u_{ij_i},v_{ij_i}]$, and $H_i(v_{ij_i},b_i]$. Due to that $\gamma(u_{1j_1},v_{1j_1},u_{2j_2},v_{2j_2})$ is true, we argue below that for either $i=1$ or $i=2$, both $H_i[a_i,u_{ij_i})$ and $H_i(v_{ij_i},b_i]$ are OvL-prunable, and therefore, $t_i$ must be in $H_i[u_{ij_i},v_{ij_i}]$. This is done by case analysis after applying Lemma~\ref{lem:prune} on all pairs of points $(p_1,p_2)$ with $p_1\in \{u_{1j_1},v_{1j_1}\}$ and $p_2\in \{u_{2j_2},v_{2j_2}\}$.

Indeed, suppose we first apply Lemma~\ref{lem:prune} on $(u_{1j_1},u_{2j_2})$. Since $\gamma(u_{1j_1},v_{1j_1},u_{2j_2},v_{2j_2})$ is true, neither $H_1(u_{1j_1},v_{1j_1})$ nor $H_2(u_{2j_2},v_{2j_2})$ is OvL-prunable. Since $H_1(u_{1j_1},v_{1j_1})\subseteq H_1(u_{1j_1},b_1]$ and $H_2(u_{2j_2},v_{2j_2})\subseteq H_2(u_{2j_2},b_2]$, we obtain that neither $H_1(u_{1j_1},b_1]$ nor $H_2(u_{2j_2},b_2]$ is OvL-prunable. Therefore, at least one of $H_1[a_1,u_{1j_1})$ and $H_2[a_2,u_{2j_2})$ is OvL-prunable. Without loss of generality, we assume that $H_1[a_1,u_{1j_1})$ is OvL-prunable (see Fig.~\ref{fig:prunable}). 

\begin{figure}[h]
\begin{minipage}[t]{\textwidth}
\begin{center}
\includegraphics[height=1.2in]{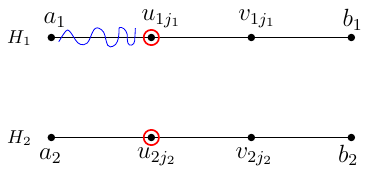}
\caption{\footnotesize After Lemma~\ref{lem:prune} is applied on $(u_{1j_1},u_{2j_2})$, $H_1[a_1,u_{1j_1})$ becomes OvL-prunable.}
\label{fig:prunable}
\end{center}
\end{minipage}
\end{figure}

Next we apply Lemma~\ref{lem:prune} on $(v_{1j_1},u_{2j_2})$. Following similar analysis to the above, one of $H_1(v_{1j_1},b_1]$ and $H_2[a_2,u_{2j_2})$ must be OvL-prunable. If $H_1(v_{1j_1},b_1]$ is OvL-prunable, then we have proved that both $H_i[a_i,u_{ij_i})$ and $H_i(v_{ij_i},b_i]$ are OvL-prunable for $i=1$ and thus we are done with proof. In the following, we assume that $H_1(v_{1j_1},b_1]$ is not OvL-prunable and thus $H_2[a_2,u_{2j_2})$ must be OvL-prunable (see Fig.~\ref{fig:prunable10}).

\begin{figure}[h]
\begin{minipage}[t]{\textwidth}
\begin{center}
\includegraphics[height=1.2in]{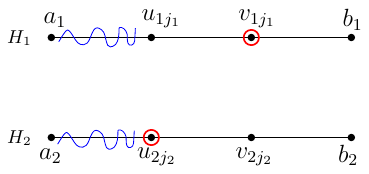}
\caption{\footnotesize After Lemma~\ref{lem:prune} is applied on $(v_{1j_1},u_{2j_2})$, $H_2[a_2,u_{2j_2})$ becomes OvL-prunable.}
\label{fig:prunable10}
\end{center}
\end{minipage}
\end{figure}

Now we apply Lemma~\ref{lem:prune} on $(v_{1j_1},v_{2j_2})$. Following similar analysis to the above, one of $H_1(v_{1j_1},b_1]$ and $H_2(v_{2j_2},b_2]$ must be OvL-prunable. Without loss of generality, we assume that $H_1(v_{1j_1},b_1]$ is OvL-prunable. As such, we have proved that both $H_i[a_i,u_{ij_i})$ and $H_i(v_{ij_i},b_i]$ are OvL-prunable for $i=1$ and thus we are done with proof.

In summary, the above proves that both $H_i[a_i,u_{ij_i})$ and $H_i(v_{ij_i},b_i]$ are OvL-prunable for some $i\in \{1,2\}$, and therefore, $t_i$ must be in $H_i[u_{ij_i},v_{ij_i}]$. The above also provides a procedure that can determine the $i$ such that $t_i\in H_i[u_{ij_i},v_{ij_i}]$ by applying Lemma~\ref{lem:prune} on $O(1)$ pairs of points, which uses $O(1)$ comparisons. The lemma thus follows. 
\end{proof}

By virtue of the above lemma, without loss of generality, we assume that $t_1\in H_1[u_{1j_1},v_{1j_1}]$ (and thus $t_1\in H_1(u_{1j_1}-1,v_{1j_1}+1)$). This finishes the current iteration. Next, we proceed on the next iteration with the two chains $H_1[u_{1j_1}-1,v_{1j_1}+1]$ and $H_2[u_2,v_2]$, i.e., we update $u_1=u_{1j_1}-1$ and $v_1=v_{1j_1}+1$ (and thus $t_1\in H_1(u_1,v_1)$). Recall that $t_2\in H_2(u_2,v_2)$. Hence, the algorithm invariant holds. 

In this way, each iteration on $H_1[u_1,v_1]$ and $H_2[u_2,v_2]$ shrinks one of them to a subchain of size roughly $1/r$ of its original size. Since $|H_1|+|H_2|\leq n$, after $O(\log_rn)$ iterations, 
a tangent point can be found. The total number of comparisons is thus $O(\log_rn-r^2\cdot \Delta\Phi)$. 

\paragraph{Computing the second tangent point}
Without loss of generality, assume that $t_1$ has been found as above. Then, to locate $t_2$, we can basically follow a similar but simpler process, with the help of $t_1$. We briefly discuss it below. 

We start with the following simpler version of Lemma~\ref{lem:prune}.

\begin{lemma}\label{lem:simprune}{\em (Overmars and van Leeuwen~\cite{ref:OvermarsMa81})}
Given a point $p_2\in H_2$, $p_2$ partitions $H_2$ into two chains (we assume that neither chain contains $p_2$). We can determine at least one chain that does not contain the tangent point $t_2$ based on the following four points: $t_1$, $p_2$ and its two neighboring vertices of $H_2$; such a chain is called {\em OvL-prunable}. 
\end{lemma}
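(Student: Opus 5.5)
The plan is to prove Lemma~\ref{lem:simprune} directly from the geometric characterization of the lower common tangent once $t_1$ is known. Because $t_1$ is a tangent point, $t_2$ is exactly the vertex of $H_2$ at which the line through $t_1$ supports $H_2$ from below. More precisely, $t_2$ is the unique vertex of $H_2$ such that every vertex of $H_2$ lies on or above the line $\overline{t_1t_2}$. By our general position assumption, $t_2$ is unique and not an endpoint. Recall also that $H_1$ lies entirely to the left of $H_2$ (they are separated by a vertical line, as in the merging step).

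The key step is a local test at $p_2$ that uses only $t_1$, $p_2-1$, $p_2$, and $p_2+1$. Consider the line $\lambda$ through $t_1$ and $p_2$, and compute two orientations:
\begin{itemize}
\item whether $p_2-1$ lies above or below $\lambda$;
\item whether $p_2+1$ lies above or below $\lambda$.
\end{itemize}
There are three cases.
\begin{itemize}
\item If both neighbors lie on or above $\lambda$, then $\lambda$ locally supports $H_2$ at $p_2$. Since $H_2$ is a convex chain, local support implies global support, so $p_2=t_2$. In this case both chains are prunable, and we can stop.
\item If $p_2+1$ lies below $\lambda$, then rotating the line around $t_1$ toward the right part of the chain decreases it further. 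The tangent vertex must therefore lie strictly right of $p_2$, so the left chain $H_2[a_2,p_2)$ is OvL-prunable.
\item If $p_2-1$ lies below $\lambda$, then symmetrically the right chain $H_2(p_2,b_2]$ is OvL-prunable.
\end{itemize}

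I expect the main obstacle to be justifying the monotonicity claims used in the second and third cases. I would argue it as follows. Because $t_1$ lies to the left of all of $H_2$, the slopes of the segments from $t_1$ to the vertices of $H_2$ behave unimodally along the convex lower chain $H_2$. Concretely, let $f(v)$ be the slope of the segment from $t_1$ to vertex $v$. Then $f$ first decreases and then increases along $H_2$ from left to right. This follows from the convexity of the lower hull combined with the fact that $x(t_1) < x(v)$ for all $v \in H_2$. The tangent point $t_2$ is the minimizer of $f$. The orientation tests in the case analysis are exactly comparisons of $f(p_2)$ with $f(p_2\pm1)$. Unimodality then tells us on which side of $p_2$ the minimum lies, which establishes the pruning rule.

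Finally, every test above is a sign evaluation of a constant-degree polynomial in the coordinates of the four points $t_1$, $p_2-1$, $p_2$, and $p_2+1$. This yields the "based on the following four points" form stated in the lemma.
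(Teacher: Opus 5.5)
Your proof is correct. The paper gives no argument of its own for this lemma and simply cites Overmars and van Leeuwen. Your orientation test at $p_2$ against the line through $t_1$ and $p_2$ is exactly the standard criterion behind that citation, justified by unimodality of the slope from $t_1$ along $H_2$; this relies on $H_1$ lying left of $H_2$, which holds in the paper's merging step.

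The one step you leave informal, the unimodality, follows cleanly. Let $H_2$ be the graph of a convex function $g$ and $t_1=(x_0,y_0)$, with $x_0$ less than every $x$-coordinate on $H_2$. Then each sublevel set $\{x : (g(x)-y_0)/(x-x_0)\le s\} = \{x : g(x)-y_0-s(x-x_0)\le 0\}$ is an interval, so the slope from $t_1$ is quasi-convex along $H_2$.
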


For a chain $H_2[u,v]$ of $H_2$, we define a predicate $\gamma(u,v)$ with respect to the two vertices $u$ and $v$, as follows. 

\begin{definition}
Define $\gamma(u,v)$ to be true if $H_2(u,v)$ is not OvL-prunable when Lemma~\ref{lem:simprune} is applied to $p_2$ for any $p_2\in \{u,v\}$. 
\end{definition}

As before, the predicate  $\gamma(u,v)$ is of $O(1)$ degree.

Let $H_2[u_2,v_2]$ be the portion of $H_2$ after $t_1$ is computed in the above algorithm. By the algorithm invariant, we have $t_2\in H_2(u_2,v_2)$. 

We are now in a position to describe the algorithm for finding $t_2$, which works iteratively. 
In each iteration, we start with a chain $H_2[u,v]$ such that $t_2\in H_2(u,v)$. Initially, set $u=u_2$ and $v=v_2$. Hence, the algorithm invariant holds.

In each iteration, we do the following. 
We partition $H_2[u,v]$ into $r$ chains of roughly equal lengths: $H_2[u_j,v_j]$, $1\leq j\leq r$, with the same $r$ as above. 
Define $C=\{H_2[u_{j}-1,v_{j}+1]\ |\ 1\leq j\leq r\}$. 
We have the following observation. The argument is similar to Observation~\ref{obser:70} and thus is omitted. 
\begin{observation}
$C$ must contain a chain $H_2[u_{i},v_{i}]$ such that the predicate $\gamma(u_i,v_i)$ holds true. 
\end{observation}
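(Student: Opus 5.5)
The plan is to mirror the proof of Observation~\ref{obser:70}, using only the invariant $t_2\in H_2(u,v)$ and the correctness of the pruning rule in Lemma~\ref{lem:simprune}.

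First, since $H_2[u,v]$ is partitioned into the consecutive chains $H_2[u_j,v_j]$, $1\leq j\leq r$, and $t_2\in H_2(u,v)\subseteq H_2[u,v]$, there is an index $i$ with $t_2\in H_2[u_i,v_i]$. We pass to the enlarged chain $H_2[u_i-1,v_i+1]$ of $C$. Its open interior $H_2(u_i-1,v_i+1)=H_2[u_i,v_i]$ contains $t_2$.

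The step needing care is the convention $u-1=u$ (resp.\ $v+1=v$) at the endpoints of $H_2$. This is handled by the standing assumption that $t_2$ is not an endpoint of $H_2$. If $u_i$ is the leftmost vertex of $H_2$, then $t_2\neq u_i$, so $t_2$ still lies strictly between the two endpoints of the enlarged chain. The right end is symmetric.

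Next, write $u'=u_i-1$ and $v'=v_i+1$, and apply Lemma~\ref{lem:simprune} with $p_2\in\{u',v'\}$. In each case the lemma declares prunable one of the two chains into which $p_2$ splits $H_2$, and the declared chain never contains $t_2$. For $p_2=u'$, the right chain $H_2(u',b_2]$ contains $t_2$, so it is not prunable. Hence its subchain $H_2(u',v')$ is not prunable either, by the definition of prunability for subchains. Symmetrically, for $p_2=v'$, the left chain $H_2[a_2,v')$ contains $t_2$, so $H_2(u',v')$ is again not prunable. Therefore $\gamma(u',v')$ holds.

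The only subtle point is that ``OvL-prunable'' for a subchain must be read as ``contained in a chain declared prunable by the lemma.'' Under that reading, non-prunability follows directly from the fact that $t_2$ lies in the complementary chain.
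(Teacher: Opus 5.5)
Your proposal is correct and follows the argument the paper intends; the paper omits this proof as analogous to Observation~\ref{obser:70}: $t_2$ lies in the interior of some enlarged chain of $C$, an interior containing $t_2$ cannot be OvL-prunable for either endpoint, and so $\gamma$ holds. Your explicit treatment of the endpoint convention $u-1=u$, and of what prunability means for a subchain, fills in details the paper leaves implicit.
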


By the preceding observation, we apply the basic search lemma on the $r$ chains of $C$ to find a interval $H_2[u_i,v_i]$ so that $\gamma(u_i,v_i)$ holds true. We have the following lemma. 

\begin{lemma}\label{lem:110}
If $\gamma(u_i,v_i)$ holds true, then $t_2$ must be in $H_2[u_i,v_i]$.
\end{lemma}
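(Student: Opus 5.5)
Let $a_2$ and $b_2$ denote the leftmost and rightmost vertices of $H_2$. The vertices $u_i$ and $v_i$ split $H_2$ into three chains: $H_2[a_2,u_i)$, $H_2[u_i,v_i]$ and $H_2(v_i,b_2]$. The plan is to show that both outer chains are OvL-prunable. This follows the argument for Lemma~\ref{lem:80}, but is simpler because only one hull is searched and $t_1$ is fixed.

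First, apply Lemma~\ref{lem:simprune} with $p_2=u_i$. It splits $H_2$ into $H_2[a_2,u_i)$ and $H_2(u_i,b_2]$, and declares at least one of them OvL-prunable, meaning it cannot contain $t_2$. Since $\gamma(u_i,v_i)$ holds, $H_2(u_i,v_i)$ is not OvL-prunable when the lemma is applied at $u_i$. A chain whose subchain is not prunable cannot itself be prunable: by definition, every subchain of a prunable chain is prunable. Hence $H_2(u_i,b_2]\supseteq H_2(u_i,v_i)$ is not OvL-prunable, and therefore $H_2[a_2,u_i)$ must be. Symmetrically, applying the lemma with $p_2=v_i$, the chain $H_2[a_2,v_i)$ contains $H_2(u_i,v_i)$ and so is not prunable, which forces $H_2(v_i,b_2]$ to be OvL-prunable.

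Combining the two steps, $t_2$ lies in neither $H_2[a_2,u_i)$ nor $H_2(v_i,b_2]$. Since $t_2$ is a vertex of $H_2$, it must lie in $H_2[u_i,v_i]$.

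The one step needing care is that containment argument in the degenerate case where $H_2(u_i,v_i)$ is empty, i.e., $u_i$ and $v_i$ are adjacent or equal. There the statement ``$H_2(u_i,v_i)$ is not prunable'' carries no information, because an empty chain trivially contains no tangent point. I would rule this out using the construction of $C$. Each chain in $C$ is $H_2[u_j-1,v_j+1]$, so its interior $H_2(u_j-1,v_j+1)=H_2[u_j,v_j]$ is nonempty. I would then read ``not OvL-prunable'' as meaning the criterion of Lemma~\ref{lem:simprune} does not exclude that nonempty interior, so the inclusion argument above applies.
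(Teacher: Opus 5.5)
Your proof is correct and matches the paper's argument: apply Lemma~\ref{lem:simprune} at $u_i$ and at $v_i$, and use that $H_2(u_i,v_i)$ lies inside $H_2(u_i,b_2]$ and inside $H_2[a_2,v_i)$ to force both outer chains to be OvL-prunable. Your extra handling of the degenerate case is harmless but not needed: under the paper's convention an empty $H_2(u_i,v_i)$ is a subchain of the pruned chain and hence itself OvL-prunable, so $\gamma(u_i,v_i)$ already fails there.
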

\begin{proof}
Let $a_2$ and $b_2$ be the left and right endpoints of $H_2$, respectively. The two vertices $u_i$ and $v_i$ partition $H_2$ into three chains: $H_i[a_2,u_i)$, $H_i[u_i,v_i]$, and $H(v_i,b_2]$. In the following, we argue that both $H[a_2,u_i)$ and $H(v_i,b_2]$ are OvL-prunable, and therefore, $t_2$ must be in $H_2[u_i,v_i]$.

Suppose we apply Lemma~\ref{lem:simprune} on $u_i$. By Lemma~\ref{lem:simprune}, either $H_2[a_2,u_i)$ or $H_2(u_i,b_2]$ must be OvL-prunable. Since $\gamma(u_i,v_i)$ is true, by definition, $H_2(u_i,v_i)$ is not OvL-prunable. As $H_2(u_i,v_i)\subseteq H_2(u_i,b_2]$, $H_2(u_i, b_2)$ cannot be OvL-prunable. Therefore, $H_2[a_2,u_i)$ must be OvL-prunable. 

Suppose we apply Lemma~\ref{lem:simprune} on $v_i$. Following a similar analysis to the above, $H(v_i,b_2]$ must be OvL-prunable. 
\end{proof}

We proceed with the next iteration on the chain $H_2[u_i-1,v_i+1]$, i.e., update $u=u_i-1$ and $v=v_i+1$. By Lemma~\ref{lem:110}, $t_2$ must be in $H_2(u,v)$ and thus the algorithm invariant is established. 

Since each iteration shrinks the chain to roughly $1/r$ of its original size, after $O(\log_r n)$ iterations, $t_2$ will be located. 
As such, the number of comparisons is $O(\log_rn-r\cdot \Delta\Phi)$.

\paragraph{Summary}
In summary, with $O(\log_rn - r^2\cdot \Delta\Phi)$ comparisons, the lower common tangent $\overline{t_1t_2}$ can be computed. Setting $r=n^{1/8}$ leads to an upper bound of $O(1-n^{1/4}\cdot \Delta\Phi)$ on the number of comparisons.

In this way, the total number of comparisons for constructing the lower hull $H_+(p^*)$ by merging the pieces of $\calL(\calH_+(p^*))$ is $O(K_{p^*}-n^{1/4}\cdot \Delta\Phi)$ since $\calL(\calH_+(p^*))$ has at most $2K_{p^*}-1$ pieces.
As $\sum_{p^*\in P^*}K_{p^*}=O(n^{4/3})$ and the sum of $-\Delta\Phi$ in the entire algorithm is $O(n\log n)$, we obtain that the total number of comparisons for computing the lower hulls $H_+(p^*)$ for all $p^*\in P^*$ is bounded by $O(n^{4/3})$.

\subsection{Solving the subproblems $\boldsymbol{T(n^{2/3},n^{1/3})}$}
In this section, we show that each subproblem $T(n^{2/3},n^{1/3})$ in Recurrence~\eqref{equ:50} can be solved using $O(n^{2/3})$ ``amortized'' comparisons, or all $O(n^{2/3})$ subproblems $T(n^{2/3},n^{1/3})$ in~\eqref{equ:50} can be solved using a total of $O(n^{4/3})$ comparisons.

Recall that $P$ is the set of $n$ points and $L$ is the set of $n$ lines for the original problem in Recurrence~\eqref{equ:50}. For notational convenience, let $m=n^{1/3}$ and thus we want to
solve $T(m^2,m)$ using $O(m^2)$ amortized comparisons. More specifically, we are given a set $P'$ of $m^2$ points and a set $L'$ of 
$m$ lines. The problem is to compute all non-empty faces of the arrangement $\calA(L')$ of the lines of $L'$ that contain at least one point of $P'$. Our algorithm will return a binary search tree representing each non-empty face. 

We first construct the arrangement $\calA(L')$, which can be done in $O(m^2)$ time~\cite{ref:EdelsbrunnerAr92}. Then, for each face of $\calA(L')$, we construct a binary search tree representing the face. Doing this for all faces takes $O(m^2)$ time as well. As such, it suffices to locate the face of $\calA(L')$ containing each point of $P'$. Note that this point location problem is exactly a bottleneck subproblem in the algorithm for Hopcroft's problem in \cite{ref:ChanHo23}. Applying the $\Gamma$-algorithm of \cite[Section~4.3]{ref:ChanHo23}, we can solve all $O(n^{2/3})$ subproblems $T(n^{2/3},n^{1/3})$ in Recurrence~\eqref{equ:50} using a total of $O(n^{4/3})$ comparisons. For completeness, we present our own algorithm in the following. 

We construct a hierarchical $(1/m)$-cutting for the lines of $L'$: $\Xi_0,\Xi_1,\ldots,\Xi_k$, with a constant $\rho$ as discussed in Section~\ref{sec:pre}. This takes $O(m^2)$ time~\cite{ref:ChazelleCu93}. Since the last cutting $\Xi_k$ is an $(1/m)$-cutting, no line of $L'$ crosses the interior of $\sigma$, for each cell $\sigma$ of $\Xi_k$. As such,  every cell $\sigma$ of $\Xi_k$ is completely contained in a face of $\calA(L')$. By the virtue of this property, for each point $p\in P'$, once we locate the cell of $\Xi_k$ containing $p$, we immediately know the face of $\calA(L')$ containing $p$, provided that we label each cell $\sigma$ of $\Xi_k$ by the name of the face of $\calA(L')$ containing $\sigma$; the latter task can be done in $O(m^2)$ time~\cite{ref:ChazelleCu93} (e.g., by slightly modifying the cutting construction algorithm in \cite{ref:ChazelleCu93}). 

In this way, we essentially reduce our problem to point locations in the cutting $\Xi_k$: For each point $p\in P'$, locate the cell of $\Xi_k$ containing $p$. Using the hierarchical cutting, traditional approach can solve the problem in $O(m^2\log m)$ time since $|P'|=O(m^2)$ and performing point location for each point of $P'$ takes $O(\log m)$ time. This would result in an overall $O(n^{4/3}\log n)$ comparisons for solving all subproblems in Recurrence~\eqref{equ:50}. In what follows, we show that using the basic search lemma, there is a $\Gamma$-algorithm that can solve the problem using $O(m^2)$ amortized comparisons. This is done by using a (not contiguous) subsequence of only $O(1)$ cuttings in the hierarchy.  

For any $0\leq i\leq k$, let $|\Xi_i|$ represent the number of cells of $\Xi_i$. Recall from Section~\ref{sec:pre} that $|\Xi_i|=O(\rho^{2i})$. 
Define $i_1,i_2,\ldots,i_g$ as a subsequence of the indices $0,1,\ldots,k$ such that (1) $i_g=k$; (2) for any $1\leq j\leq g-1$, suppose $i_{j+1}$ is already defined; then $i_j$ is defined in such a way that $\rho^{2i_{j+1}}/\rho^{2i_j}=\Theta(m^{1/2})$. Note that it is easy to find such a subsequence in $O(k)$ time by scanning the indices $0,1,\ldots,k$ barckwards, starting from having $i_g=k$. In addition, let $i_0=0$. 
Observe that for each $j$ with $0\leq j\leq g-1$, each cell of $\Xi_{i_j}$ contains $O(m^{1/2})$ cells of $\Xi_{i_{j+1}}$. Also, since $\rho^k=O(m^2)$, we have $g=O(1)$. Our problem is to find the cell of $\Xi_{i_g}$ (which is $\Xi_k$) that contains $p$ for all points $p\in P'$. 

For each point $p\in P'$, starting from $\Xi_{i_0}$, for all $j=0,1,\ldots, g$ in this order, we find the cell of $\Xi_{i_j}$ containing $p$, as follows. Suppose we know the cell $\sigma$ of $\Xi_{i_j}$ containing $p$. To find the cell of $\Xi_{i_{j+1}}$ containing $p$, since $\sigma$ contains $O(m^{1/2})$ cells of $\Xi_{i_{j+1}}$, we apply the basic search lemma on these cells, which costs $O(1-m^{1/2}\cdot \Delta\Phi)$ comparisons. Hence, it eventually takes $O(g-m^{1/2}\cdot \Delta\Phi)$ comparisons to find the cell of $\Xi_{i_g}$ containing $p$. As $g=O(1)$ and $|P'|=m^2$, the total number of  comparisons for all points of $P'$ is $O(m^2-m^{1/2}\cdot \Delta\Phi)$.

Now back to our original problem for solving the subproblem $T(n^{2/3},n^{1/3})$ in Recurrence~\eqref{equ:50}, the above shows that each subproblem can be solved using $O(n^{2/3}-n^{1/6}\cdot \Delta\Phi)$ comparisons as $m=n^{1/3}$. Therefore, solving all $O(n^{2/3})$ subproblems in Recurrence~\eqref{equ:50} can be done using $O(n^{4/3}-n^{1/6}\cdot \Delta\Phi)$ comparisons. Since  the sum of $-\Delta\Phi$ in the entire algorithm is $O(n\log n)$, we obtain that the total number of comparisons for solving all $O(n^{2/3})$ subproblems $T(n^{2/3},n^{1/3})$ in Recurrence~\eqref{equ:50} is bounded by $O(n^{4/3})$.

We remark that using a much simpler fractional cascading technique for line arrangements~\cite[Section~3]{ref:ChanHo23}, we can solve all $O(n^{2/3})$ subproblems $T(n^{2/3},n^{1/3})$ in $O(n^{4/3})$ randomized expected time without resorting to $\Gamma$-algorithms. 

\subsection{Putting it all together}

The above proves Lemma~\ref{lem:40}, and thus $T(n,n)$ in Recurrence \eqref{equ:50} can be bounded by
$O(n^{4/3})$ after $O(2^{\poly(n)})$ time preprocessing, as discussed before. Equivalently,
$T(b,b)$ in Recurrence~\eqref{equ:40} is $O(b^{4/3})$ after $O(2^{\poly(b)})$ time
preprocessing. Notice that the preprocessing work is done only once and for all
subproblems $T(b,b)$ in \eqref{equ:40}.
Since $b=(\log\log n)^3$, we have $2^{\poly(b)}=O(n)$.
As such, $T(n,n)$ in \eqref{equ:40} solves to $O(n^{4/3})$ and we thus obtain the
following theorem.

\begin{theorem}\label{theo:symmetric}
Given a set of $n$ points and a set of $n$ lines in the plane, we can report all faces of the line arrangement 
that contain at least one point in $O(n^{4/3})$ time.
\end{theorem}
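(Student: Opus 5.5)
The plan is to assemble the pieces already developed rather than introduce new machinery. First, combine Recurrences~\eqref{equ:20} and~\eqref{equ:10} with the same parameter $r$. This yields a recurrence of the form $T(n,n)=O(nr\log n+r^4\log(n/r^2))+O(r^4)\,T(n/r^3,n/r^3)$. With $r=n^{1/3}/\log n$ the nonrecursive cost is $O(n^{4/3})$; this is Recurrence~\eqref{equ:30}.

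Next, unroll~\eqref{equ:30} a constant number of times (twice suffices) until the subproblem size is $b=(\log\log n)^3$. The cost at each level stays $O(n^{4/3})$, because the subproblem counts scale as $(n/s)^{4/3}$ for subproblem size $s$ and each subproblem costs $O(s^{4/3})$. We also need to presort the lines and dual points in each subproblem. That costs $O(s\log s)$ per subproblem, which is dominated by $s^{4/3}$, so the sorting could even be absorbed level by level. This gives Recurrence~\eqref{equ:40}: $T(n,n)=O(n^{4/3})+O((n/b)^{4/3})\,T(b,b)$.

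For the base subproblems, invoke Lemma~\ref{lem:40} with input size $b$. It gives an algorithm using only $O(b^{4/3})$ comparisons after $2^{\poly(b)}$ preprocessing. The preprocessing is independent of the input values: it builds the arrangement $\calA^*$ in $O(b)$-dimensional space. Following Chan and Zheng, we turn this $\Gamma$-algorithm into an explicit algebraic decision tree of height $O(b^{4/3})$ by simulating it over all cells of $\calA^*$, which has $b^{O(b)}$ cells. Each comparison is resolved by recording which cells are consistent with each outcome. The construction takes $2^{\poly(b)}$ time and is done once, shared by all subproblems of size $b$.

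The decision tree is then executed on each actual subproblem in $O(b^{4/3})$ real-RAM time. The non-comparison work must also fit within this bound: tree splits and merges, and the output structures. This is where I expect the main subtlety. A leaf of the tree must determine not just the combinatorial answer but the output itself, namely the face descriptions, each as a list of input lines. That output has size $O(b^{4/3})$ and can be written directly from the leaf's label. The leaf labels are stored during preprocessing, which is legitimate because the combinatorial output is constant over a cell of $\calA^*$.

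Finally, since $b=(\log\log n)^3$, we have $2^{\poly(b)}=2^{\poly(\log\log n)}=o(n)$. So the preprocessing is $O(n)$, and the total is $O(n^{4/3})+O((n/b)^{4/3})\cdot O(b^{4/3})=O(n^{4/3})$. The output faces are read off as in Section~\ref{sec:primal} or Section~\ref{sec:dual}, with inner common tangents taking $O(\log n)$ per point and reporting linear in the output size. The output size is $O(n^{4/3})$ by the Clarkson et al.\ bound, which proves Theorem~\ref{theo:symmetric}.
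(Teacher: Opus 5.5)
Your proposal follows the paper's proof: combine \eqref{equ:20} and \eqref{equ:10} into \eqref{equ:30}, apply it twice to reach $O((n/b)^{4/3})$ instances $T(b,b)$ with $b=(\log\log n)^3$, and solve these via Lemma~\ref{lem:40} using an explicitly built decision tree whose $2^{\poly(b)}$ construction cost is $O(n)$. One small correction to the leaf-output detail you add, which the paper itself leaves implicit: the $T(b,b)$ instances are recursive subproblems, so their parents need a search tree for $\calU(L^+_{\sigma}(p))$ (and the symmetric lower envelope) for each point $p$, not face descriptions; these trees can have total size $\Theta(b^2)$, so a leaf should return pointers into shared trees precomputed during preprocessing rather than write the output out explicitly.
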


\subsection{The asymmetric case and the lower bound}

The asymmetric case of the problem is handled in the following corollary, which uses the algorithm of Theorem~\ref{theo:symmetric} as a subroutine. 

\begin{corollary}\label{coro:asymetric}
Given a set of $m$ points and a set of $n$ lines in the plane, we can report all faces of the line arrangement 
that contain at least one point in $O(m^{2/3}n^{2/3}+(n+m)\log n)$ time.
\end{corollary}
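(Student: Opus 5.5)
We reduce to the symmetric case of Theorem~\ref{theo:symmetric} by splitting the problem into balanced subproblems with a cutting, working in the primal or the dual plane depending on which of $m$ and $n$ is larger. First we dispose of the extreme ranges. If $m\le \sqrt{n}$, the bound $m^{2/3}n^{2/3}$ is at most $n$. In that case we compute each face $F_p(L)$ directly, as the intersection of halfplanes determined by $p$, in $O(n\log n)$ time per point. This is too slow as stated, so instead we use the dual approach of Section~\ref{sec:dual}. Wang's algorithm runs in $O(n\log n+m\sqrt{n\log n})$ time, which is $O(n\log n)$ once $m\le \sqrt{n/\log n}$. The leftover range $\sqrt{n/\log n}<m\le\sqrt n$ is handled by the general case below. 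Symmetrically, if $m\ge n^2$, then $m^{2/3}n^{2/3}\ge n^2$. Here we build $\calA(L)$ in $O(n^2)$ time, preprocess it for point location, and locate every point in $O(m\log n)$ total time. Outputting each nonempty face once costs $O(n^2)$, since the arrangement has that total complexity.

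For the main range $\sqrt n\le m\le n^2$ we split into two cases. When $m\le n$, we build a $(1/r)$-cutting of the dual lines $P^*$ with $r=m^{2/3}/n^{1/3}$; note $r\in[1,m]$ because $m\ge\sqrt n$. Equivalently, we apply the framework of Section~\ref{sec:dual} with $O(r^2)$ cells, each refined as in $\Xi_{k+1}$ so that it holds at most $n/r^2$ points of $L^*$ and is crossed by at most $m/r$ lines of $P^*$. With this $r$ we get $m/r=n/r^2=(mn)^{1/3}$, so each subproblem is symmetric of size $s=(mn)^{1/3}$. By Theorem~\ref{theo:symmetric} each subproblem costs $O(s^{4/3})$, and the $O(r^2)$ subproblems together cost $O(r^2 s^{4/3})=O(m^{2/3}n^{2/3})$. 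When $m>n$, we work in the primal plane as in Section~\ref{sec:primal}. We take a $(1/r)$-cutting of $L$ with $r=n^{2/3}/m^{1/3}$ and group the points into $O(r^2)$ groups of size at most $m/r^2$. Then $n/r=m/r^2=(mn)^{1/3}$, and the same arithmetic again gives $O(m^{2/3}n^{2/3})$ for all subproblems.

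The remaining cost is combining subproblem answers into global faces, and we must keep it within the bound. In the primal case the combination is Lemma~\ref{lem:30}: we take the envelopes $\calU(L_+(\sigma))$ of all cells in $O(r^2\log n)$ time and then merge per point in $O(m\log n)$ time. The subproblems must return search trees for the local faces, which Theorem~\ref{theo:symmetric}'s algorithm provides, and the merge works because the local lines cross $\sigma$ while the global ones lie below it. The other terms are $O(nr)$ for the cutting and $O(r^2\log n)\le O(nr)$ for the envelopes, and both are at most $m^{2/3}n^{2/3}$ when $m\ge n$. In the dual case we follow the Section~\ref{sec:dual} analysis. Its dominant cost, the hull-merging term $O(mr\log n)$, has to be beaten. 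With $r=m^{2/3}/n^{1/3}$ we have $mr=m^{5/3}/n^{1/3}\le m^{2/3}n^{2/3}$ when $m\le n$, so only a $\log n$ factor is in excess.

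\textbf{Main obstacle.} The hard step is removing that $\log n$ factor from the merging. I would first try shrinking $r$ by a $\log$ factor, taking $r=m^{2/3}/(n^{1/3}\log^{1/3}n)$, which makes $mr\log n$ fit. The subproblems then become unbalanced ($m/r$ versus $n/r^2$ differ by a $\log^{1/3}$ factor), so I would split them further into symmetric pieces using the primal step. Alternatively, I would reuse the $\Gamma$-merge only within the symmetric solver. If none of this works cleanly, I would accept the intermediate additive terms $O(n\log n+m\log n)$, which are within the stated bound in any case.
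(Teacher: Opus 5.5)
Your overall structure matches the paper's. You dispose of $m\ge n^2$ by building $\calA(L)$, handle small $m$ directly, and otherwise apply one level of recurrence \eqref{equ:10} (for $m>n$) or \eqref{equ:20} (for $m\le n$), then solve the subproblems with Theorem~\ref{theo:symmetric}. The gap is in your choice of $r$: the arithmetic you use to justify it is false.

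\begin{itemize}
\item In the dual case, $r=m^{2/3}/n^{1/3}$ gives $m/r=(mn)^{1/3}$, but $n/r^2=n^{5/3}/m^{4/3}$. These agree only when $m=n^{4/5}$.
\item In the primal case, $r=n^{2/3}/m^{1/3}$ gives $m/r^2=m^{5/3}/n^{4/3}$, which differs from $n/r$ unless $m=n^{5/4}$.
\item So the subproblems are not symmetric, and Theorem~\ref{theo:symmetric} does not give the claimed cost on them. Even taking $s=(mn)^{1/3}$ at face value, the dual case gives $r^2s^{4/3}=m^{16/9}n^{-2/9}$, not $m^{2/3}n^{2/3}$.
\item Already at $m=n$ your dual choice is $r=n^{1/3}$. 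This produces exactly the unbalanced subproblems $T(n^{2/3},n^{1/3})$ of \eqref{equ:50}, plus a merge term $mr\log n=n^{4/3}\log n$. You have re-created the bottleneck that the $\Gamma$-algorithm machinery of Section~\ref{sec:combine} exists to remove, rather than reducing to Theorem~\ref{theo:symmetric}.
\item Your ``main obstacle'' paragraph does not repair this. With your $r$ the merge term $m^{5/3}n^{-1/3}\log n$ exceeds $m^{2/3}n^{2/3}$ whenever $m>n/\log n$, so it is not absorbed into $O((n+m)\log n)$. Shrinking $r$ by a logarithmic factor does nothing about the imbalance.
\end{itemize}

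The fix is to choose $r$ so that the two sizes in the recurrence coincide.

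\begin{itemize}
\item Dual case: $m/r=n/r^2$ gives $r=n/m$, which satisfies $1\le r\le m$ for $\sqrt n\le m\le n$. The $O(r^2)$ subproblems have size $m^2/n$ and cost $O\bigl((n/m)^2(m^2/n)^{4/3}\bigr)=O(m^{2/3}n^{2/3})$ in total. The overhead is $O(mr\log n+n\log r)=O(n\log n)$, so the hull-merging term is just $O(n\log n)$ and there is no obstacle.
\item Primal case: $n/r=m/r^2$ gives $r=m/n$, valid for $n\le m\le n^2$. The subproblems have size $n^2/m$, again totalling $O(m^{2/3}n^{2/3})$. The overhead is $O(nr+m\log n+r^2\log n)=O(m\log n)$, since $nr=m$ and $r^2=m^2/n^2\le m$.
\end{itemize}

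This is precisely the paper's proof.

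A smaller point: your claim that the range $\sqrt{n/\log n}<m<\sqrt n$ is covered by the general case is wrong, because your $r$ is below $1$ there. No special treatment is actually needed: Wang's bound $O(n\log n+m\sqrt{n\log n})$ is already $O(n\log n)$ for all $m\le\sqrt{n\log n}$. The paper instead applies \eqref{equ:20} with $r=m$ when $n\ge m^2$.
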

\begin{proof}
Let $P$ be the set of $m$ points and $L$ the set of $n$ lines. 
Depending on whether $m\geq n$, there are two cases. 

\begin{enumerate}
  \item If $m\geq n$, depending on whether $m< n^2$, there are two subcases.

  \begin{enumerate}
     \item If $m\geq n^2$, then we use the straightforward algorithm discussed in Section~\ref{sec:intro}, i.e., first construct the arrangement $\calA(L)$ and then perform point locations for points of $P$. The runtime is $O(m\log n+n^2)$, which is $O(m\log n)$ since $m\geq n^2$. 
    
    \item If $m<n^2$, then set $r=m/n$ so that $n/r = m/r^2$. Consequently, applying \eqref{equ:10} and solving each subproblem $T(m/r^2,n/r)$ by Theorem~\ref{theo:symmetric} give us $T(m,n)=O(m^{2/3}n^{2/3}+m\log n)$.

  \end{enumerate}

  \item If $n>m$, depending on whether $n\geq m^2$, there are two subcases.
  \begin{enumerate}
      \item If $n\geq m^2$, then setting $r=m$ and applying \eqref{equ:20} give us $T(m,n)=O(n\log n)+O(m^2)\cdot T(1,n/m^2)$. Note that $T(1,n/m^2)$ is to compute the cell of the line arrangement of $O(n/m^2)$ lines that contains a single point, which can be solved in $O(n/m^2\cdot \log(n/m^2))$ time. As such, we obtain $T(m,n)=O(n\log n)$. 

      \item If $n<m^2$, then set $r=n/m$ so that $m/r=n/r^2$. Consequently, applying \eqref{equ:20} and solving each subproblem $T(m/r,n/r^2)$ by Theorem~\ref{theo:symmetric} give us $T(m,n)=O(m^{2/3}n^{2/3}+n\log n)$.
  \end{enumerate}

\end{enumerate}

Combining all above cases leads to an upper bound of $O(m^{2/3}n^{2/3}+(n+m)\log n)$ on the total time of the overall algorithm.
\end{proof}


We finally prove the lower bound below, which justifies the optimality of Corollary~\ref{coro:asymetric}.

\begin{theorem}
Given a set of $m$ points and a set of $n$ lines in the plane, computing all non-empty faces of the line arrangement 
requires $\Omega(m^{2/3}n^{2/3}+(n+m)\log n)$ time under the algebraic decision tree model.
\end{theorem}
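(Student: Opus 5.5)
The lower bound splits into three terms, each handled separately; the maximum of the three is within a constant factor of their sum. The term $\Omega(m^{2/3}n^{2/3}+n)$ is the output-size bound: by the construction of Edelsbrunner and Welzl~\cite{ref:EdelsbrunnerOn86} there are configurations where the non-empty faces have total complexity $\Omega(m^{2/3}n^{2/3}+n)$. Any algorithm that reports these faces must spend at least that much time writing the output. The term $\Omega(m)$ is trivial, since every point must be read. The term $\Omega(n\log n)$ comes from the special case of a single point placed far below all lines, or at least below their lower envelope. Then the face containing it is bounded by the lower envelope of $L$, and reporting it in order solves a problem that is known to need $\Omega(n\log n)$ in the algebraic decision tree model. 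Lower envelope computation is equivalent to convex hull by duality, and convex hull has Ben-Or's lower bound. To stay within the $m$-point problem, I duplicate the point $m$ times or put the remaining points in faces that are otherwise determined.

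The remaining and main term is $\Omega(m\log n)$, which matters when $m\gg n$. I will prove it with Ben-Or's technique~\cite{ref:Ben-OrLo83} by reducing from a membership or decision problem whose accepting set has many connected components. I fix $n$ lines in a way that is input-independent; for instance, take $n/2$ vertical-ish lines $x=i$ (slightly tilted to respect the non-vertical assumption) and $n/2$ horizontal lines $y=j$. These form a grid whose $\Theta(n^2)$ faces are cells. Consider any algorithm that outputs the non-empty faces. Faces are output once, but each face is a grid cell, so from the output one can tell which cells are occupied. The cleanest decision version is: given $m$ points with fixed lines, decide whether the non-empty faces are exactly a prescribed set $\mathcal{F}$ of $m$ distinct cells. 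With the lines fixed, the input space is $\mathbb{R}^{2m}$, and the set of inputs for which the answer is yes is the set of point tuples in which each point lies in some cell of $\mathcal{F}$ and all $m$ cells are hit. Its connected components correspond to the assignments of the $m$ labeled points bijectively to the $m$ cells, giving $m!$ components. That only yields $\Omega(m\log m)$, which is not $m\log n$ when $m>n$.

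To get $\log n$ per point instead, I will let each point choose among $\Theta(\min(n^2,\cdot))$ cells independently. The accepting set is chosen so that the answer determines, for each point, which cell it lies in; for example, fix the $x$-coordinates so that point $i$ lies in its own column strip, and have $\Theta(n)$ horizontal lines. Then each point independently chooses one of $\Theta(n)$ rows. To avoid duplicate-face ambiguity, I give each point a private column by using $m$ columns made from lines crossing at tiny angles; the simplest version is to give every point its own vertical strip between consecutive $x$-values, with $n$ horizontal-ish lines. The decision problem "is the output equal to a specified face list $F_1,\dots,F_m$" with $F_i$ in strip $i$ has, over all choices, $n^m$ distinct outputs. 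The cleanest route is Ben-Or applied to the element-uniqueness-style set $W=\{(y_1,\dots,y_m): \text{each } y_i \text{ lies in some gap of fixed lines}\}$, which has $(n+1)^m$ components. An algorithm for our problem decides membership in $W$ with $O(m)$ extra work: a point on a line is detected as degenerate, or I reduce by checking whether output faces are nondegenerate. This gives $\Omega(\log (n+1)^m - O(m))=\Omega(m\log n)$.

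The main obstacle is making this last reduction rigorous. The $y_i$ must be reduced to points in an arrangement of only $n$ lines, with strip separation handled despite lines not being vertical; for that I will use $n$ nearly horizontal lines and points with fixed, well-separated $x$. In addition, the output must be converted into a membership answer for $W$ within the decision tree model at linear extra cost, which is how I intend to realize Ben-Or's bound of $\Omega(\log \#\text{components})$ here.
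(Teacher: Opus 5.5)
The output-size term, the trivial $\Omega(m)$ term and the $\Omega(n\log n)$ single-face term agree with the paper. The gap is in the only nontrivial part, $\Omega(m\log n)$, and you leave it open yourself.

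\textbf{The set $W$ cannot be decided from the output.} The set you settle on, $W=\{(y_1,\dots,y_m):\text{no } y_i \text{ lies on a line}\}$, does have $(n+1)^m$ components. But membership in $W$ is purely a degeneracy test, and the many-faces problem does not answer it. The problem is posed for points off the lines, or, as in the paper's proof, a point on a line is simply charged to one adjacent face. So an algorithm may return exactly the same face list whether or not some $y_i$ equals the height of a line. A decision tree for the faces problem therefore does not give one for $W$, and Ben-Or's bound for $W$ does not transfer.

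\textbf{The output count is also wrong.} Your fallback count of ``$n^m$ distinct outputs'' fails because the output is the set of non-empty faces, not the point-to-face assignment. Also, $n$ lines cannot create $m\gg n$ private strips. And $n$ nearly horizontal lines have only $n+1$ faces, hence at most $2^{n+1}$ possible outputs.

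\textbf{You discarded the working route for a wrong reason.} For $m\ge n$ we have $m\log m\ge m\log n$, so your inequality is reversed; and for $m<n$ the term $m\log n\le n\log n$ is already covered. That route is essentially the paper's proof. Take $m=k=\Theta(n^2)$ points and decide whether they lie in pairwise distinct faces, which can be read off the output by counting the reported faces. The accepting set contains the $m!$ sets $W_\sigma$, one per ordering of the face indices, lying in distinct components. Ben-Or then gives $\Omega(\log m!)=\Omega(m\log n)$. The route's real limitation is that it needs $m=O(n^2)$ distinct faces.

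\textbf{Your independent-choices idea can be repaired,} and then it covers every $m$ at once, including $m\gg n^2$. Fix $n$ horizontal lines and let $F$ be every other band, so the faces of $F$ have pairwise disjoint closures. Ask whether every reported face belongs to $F$; this is a function of the output alone. Its accepting set is $U^m$, where $U$ has $\Theta(n)$ components, each lying between an open band and its closure under any fixed convention for points on lines. So the accepting set has $\Theta(n)^m$ components, and Ben-Or gives $\Omega(m\log n-O(m))=\Omega(m\log n)$ comparisons.
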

\begin{proof}
First of all, as discussed in Section~\ref{sec:intro}, $\Omega(m^{2/3}n^{2/3}+n)$ is a lower bound because it is a lower bound on the output size in the worst case~\cite{ref:EdelsbrunnerOn86}. $\Omega(n\log n)$ is also a lower bound because computing a single face requiring that much time under the algebraic decision tree model. It remains to show that $\Omega(m\log n)$ is a lower bound.

Consider the case where $m=\Omega(n^2)$. Let $P$ be the set of $m$ points and $L$ the set of $n$ lines. Let $k$ be the number of cells in the arrangement $\calA(L)$. In general, $k=\Theta(n^2)$. To simplify the notation, let $m=k$. Our problem is to compute all non-empty faces of $\calA(L)$. 

We arbitrary assign indices $1,2,\ldots,m$ to cells of $\calA(L)$. For any point $p$ in the plane, if $p$ is on the boundary of multiple faces (e.g., $p$ is on an edge or a vertex of $\calA(L)$), then we assume $p$ is only contained in the face that has the largest index. In this way, every point in the plane belongs to a single face of $\calA(L)$. Define $I(p)$ as the index of the face of $\calA(L)$ containing $p$. 

Let $P=\{p_1,p_2,\ldots,p_m\}$.
Consider the following problem, referred to as the {\em distinct-face problem}: Decide whether points of $P$ are in distinct faces of $\calA(L)$, i.e., whether $I(p_i)\neq I(p_j)$ for any two points $p_i,p_j\in P$. Clearly, the problem can be reduced to our original problem in $O(m)$ time. Indeed, if we have a solution to our original problem, then the distinct-face problem can be easily solved in $O(m)$ time. In what follows, using Ben-Or's techniques~\cite{ref:Ben-OrLo83}, we argue that solving the distinct-face problem requires  $\Omega(m\log n)$ time under the algebraic decision tree model. This implies the same lower bound for our original problem. 
 
By slightly abusing the notation, we also use $p_i$ to refer to its two coordinates. Define the set
$$W=\{(p_1,p_2,\ldots,p_m)\in \bbR^{2m}\ |\ \prod_{i\neq j}(I(p_i)-I(p_j)) \neq 0\}.$$
Our distinct-face problem is equivalent to the following {\em membership problem}: For any point $P=(p_1,p_2,\ldots,p_m)\in \bbR^{2m}$, decide whether $P\in W$. In the following, we argue that $W$ has $\Omega(m!)$ path-connected components. According to Ben-Or's result~\cite{ref:Ben-OrLo83}, this implies that $\Omega(\log (m!))=\Omega(m\log m)$ is a lower bound for solving the distinct-face problem under the algebraic decision tree model. 

For any permutation $\sigma$ of $1,2,\ldots,m$, define
$$W_{\sigma}=\{(p_1,p_2,\ldots,p_m)\in \bbR^{2m}\ |\  I(p_{\sigma(1)})<I(p_{\sigma(2)})<\ldots<I(p_{\sigma(m)})\}.$$
Clearly, $W_{\sigma}\subseteq W$. We claim that each $W_{\sigma}$ belongs to a distinct connected component of $W$. Indeed, consider the collection of functions $f_{ij}(p_1,p_2,\ldots,p_m)=\text{sign}(I(p_i)-I(p_j))$ for $i\neq j$, i.e., $f_{ij}=1,0,-1$ if $I(p_i)-I(p_j)$ is positive, zero, and negative, respectively. For any $\sigma$, each of these functions is constant either equal to $1$ or $-1$ over $W_{\sigma}$. Now consider two permutations $\sigma_1$ and $\sigma_2$, and two points $P_1,P_2\in \bbR^{2m}$ with $\mathbb{P}_1\in W_{\sigma_1}$ and $\mathbb{P}_2\in W_{\sigma_2}$. Due to the continuity, any path in $\bbR^2$ connecting $\mathbb{P}_1$ and $\mathbb{P}_2$ must go through a point $\mathbb{P}$ in which the value of at least one of the functions $f_{ij}(\mathbb{P})$ is $0$, and thus the point $\mathbb{P}$ does not belong to $W$. Therefore, $W_{\sigma_1}$ and $W_{\sigma_2}$ are in two different connected components of $W$. 

In light of the claim, the number of connected components of $W$ is no less than the number of permutations of $1,2,\ldots,m$, which is $m!$. 

The above proves that $\Omega(m\log n)$ is a lower bound for our original problem. The theorem thus follows. 
\end{proof}

\bibliographystyle{plainurl}
\bibliography{reference}




\end{document}